\newif\ifdraft\drafttrue
\newif\ifcr\crfalse
\newcolumntype{P}[1]{{\centering\arraybackslash}p{#1}}
\algnewcommand\INPUT{\item[\textbf{Input:}]}
\algnewcommand\PARAM{\item[\textbf{Parameters:}]}
\algnewcommand\OUTPUT{\item[\textbf{Output:}]}
\newtheorem{lemma}{Lemma}
\newtheorem{assumption}{Assumption}
\newtheorem{proposition}{Proposition}
\newtheorem{definition}{Definition}
\newtheorem{theorem}{Theorem}
\newcommand{\abs}[1]{\left\lvert#1\right\rvert}
\newcommand{\act}{\mathcal{X}}%action set
\newcommand{\ball}{\mathbb{B}_{\epsilon}} %ball
\newcommand{\bx}{\bm{x}}%action of a player
\newcommand{\bs}{\bm{s}}%strategy of a player
\newcommand{\by}{\bm{y}}%a point
\newcommand{\ch}{\textsc{Chd}}%\newcommand{\ch}{\mathit{CHD}}%child
\newcommand{\des}{\textsc{Des}}
\newcommand{\F}{f}%function F
\newcommand{\G}{\mathcal{G}}
\newcommand{\leaf}{\textsc{Leaf}}%\newcommand{\leaf}{\mathit{LEAF}}%leaf
\newcommand{\li}{i}%shorthand for l,i
\newcommand{\norm}[1]{\left\lVert#1\right\rVert}%norm
\newcommand{\pa}{\textsc{Pa}}%\newcommand{\pa}{\mathit{PA}}%parent
\renewcommand{\path}{\textsc{Path}}
\newcommand{\reals}{\mathbb{R}}%real numbers
\renewcommand{\u}{u}%pay-off
\newcommand{\X}{\mathcal{X}}
\newcommand{\Cinc}{\mathtt{C}^{\mathrm{inc}}}
\newcommand{\Cdec}{\mathtt{C}^{\mathrm{dec}}}
\newcommand{\Co}{\mathtt{C}}
\newcommand{\Cnc}{\mathtt{C}^{\mathrm{NC}}}
\title{Solving Structured Hierarchical Games Using Differential Backward Induction\ifcr\thanks{The full technical version of this paper is available at~\url{https://arxiv.org/abs/2106.04663}.}\fi}
\author[1]{\qquad Zun Li}
\author[2]{\quad Feiran Jia}
\author[3]{\quad Aditya Mate}
\author[4]{\quad Shahin Jabbari}
\author[1]{\quad Mithun Chakraborty}
\author[3]{\quad Milind Tambe}
\author[5]{\qquad Yevgeniy Vorobeychik}
\affil[1]{%
    University of Michigan, Ann Arbor\\
    \texttt{\{lizun,dcsmc\}@umich.edu}
}
\affil[2]{%
    Pennsylvania State University\\
    \texttt{fzj5059@psu.edu}
}
\affil[3]{%
    Harvard University\\
    \texttt{\{aditya\_mate,milind\_tambe\}@g.harvard.edu}
  }
\affil[4]{%
    Drexel University\\
    \texttt{shahin@drexel.edu}
  }  
\affil[5]{%
    Washington University in St. Louis\\
    \texttt{yvorobeychik@wustl.edu }
  }  
\begin{document}
\maketitle

\begin{abstract}
From large-scale organizations to decentralized political systems, hierarchical strategic decision making is commonplace.
We introduce a novel class of \emph{structured hierarchical games (SHGs)} that formally capture such hierarchical strategic interactions.
In an SHG, each player is a node in a tree, and strategic choices of players are sequenced from root to leaves, with root moving first, followed by its children, then followed by their children, and so on until the leaves.
A player's utility in an SHG depends on its own decision, and on the choices of its parent and \emph{all} the tree leaves.
SHGs thus generalize simultaneous-move games, as well as Stackelberg games with many followers. 
We leverage the structure of both the sequence of player moves as well as payoff dependence to develop a gradient-based back propagation-style algorithm, which we call \emph{Differential Backward Induction (DBI)}, for approximating equilibria of SHGs.
We provide a sufficient condition for convergence of DBI and 
demonstrate its efficacy in finding approximate equilibrium solutions to several SHG models of hierarchical policy-making problems.
\end{abstract}

\section{Introduction}\label{sec:intro}
The COVID-19 pandemic has revealed considerable strategic tension among the many parties involved in decentralized hierarchical policy-making.
For example, recommendations by the World Health Organization are sometimes heeded, and other times discarded by nations, while subnational units, such as provinces and urban areas, may in turn take a policy stance (such as on lockdowns, mask mandates, or vaccination priorities) that is not congruent with national policies.
Similarly, in the US, policy recommendations at the federal level can be implemented in a variety of ways by the states, while counties and cities, in turn, may comply with state-level policies, or not, potentially triggering litigation~\cite{hill2021public}.
Central to all these cases is that, besides this strategic drama, what ultimately determines infection spread is how policies are implemented \emph{at the lowest level}, such as by cities and towns, or even individuals.
Similar strategic encounters routinely play out in large-scale organizations, where actions throughout the management hierarchy are ultimately reflected in the decisions made at the lowest level (e.g., by the employees who are ultimately involved in production), and these lowest-level decisions play a decisive role in the organizational welfare.

We propose a novel model of hierarchical decision making which is a natural stylized representation of strategic interactions of this kind.
Our model, which we term \emph{structured hierarchical games (SHGs)}, represents each player by a node in a tree hierarchy.
The tree plays two roles in SHGs.
First, it captures the sequence of moves by the players: the root (the lone member of level 1 of the hierarchy) makes the first strategic choice, its children (i.e., all nodes in level 2) observe the root's choice and follow, their children then follow in turn, and so on, until we reach the leaf node players who move upon observing their predecessors' choices. Second, the tree partially captures strategic dependence: a player's utility depends on its own strategy, that of its parent, and the strategies of \emph{all of the leaf nodes}.
The sequence of moves in our model naturally captures the typical sequence of decisions in hierarchical policy-making settings, as well as in large organizations, while the utility structure captures the decisive role of leaf nodes (e.g., individual compliance with vaccination policies), as well as hierarchical dependence (e.g., employee dependence on a manager's approval of their performance, or state dependence on federal funding).
Significantly, the \emph{SHG} model generalizes a number of well-established models of strategic encounters, including (a) simultaneous-move games (captured by a 2-level SHG with the root having a single dummy action), (b) Stackelberg (leader-follower) games (a 2-level game with a single leaf node)~\cite{von1952theory,fiez2020implicit}, and (c) single-leader multi-follower Stackelberg games (e.g., a Stackelberg security game with a single defender and many attackers)~\cite{basilico2016methods,coniglio2020computing}.

Our second contribution is a gradient-based algorithm for approximately computing subgame-perfect equilibria of \emph{SHGs}.
Specifically, we propose \emph{Differential Backward Induction (DBI)}, which is a backpropagation-style gradient ascent algorithm that leverages both the sequential structure of the game, as well as the utility structure of the players.
As \emph{DBI} involves simultaneous gradient updates of players in the same level (particularly at the leaves), convergence is not guaranteed in general (as is also the case for best-response dynamics~\cite{fudenberg1998theory}).
Viewing \emph{DBI} as a dynamical system, we provide a sufficient condition for its convergence to a stable point.
Our results also imply that in the special case of two-player zero-sum Stackelberg games, \emph{DBI} converges to a local Stackelberg equilibrium~\cite{fiez2020implicit,wang2019solving}.

Finally, we demonstrate the efficacy of DBI in finding approximate equilibrium solutions to several classes of SHGs.
First, we use a highly stylized class of SHGs with polynomial utility functions to compare DBI with five baseline gradient-based approaches from prior literature.
Second, we use DBI to solve a recently proposed game-theoretic model of 3-level hierarchical epidemic policy making.
Third, we apply DBI to solve a hierarchical variant of a public goods game, which naturally captures the decentralization of decision making in public good investment decisions, such as investments in sustainable energy.
Fourth, we evaluate DBI in the context of a hierarchical security investment game, where hierarchical decentralization (e.g., involving federal government, industry sectors, and particular organizations) can also play a crucial role.
In all of these, we show that DBI significantly outperforms the state of the art approaches that can be applied to solve games with hierarchical structure.

\noindent{\bf Related Work } 
SHGs generalize both simultaneous-move games and Stackelberg games
with multiple followers~\citep{leyffer2005solving,basilico2016methods}.
They are also related to \textit{graphical games}~\cite{kearns2013graphical} in capturing utility dependence structure, although SHGs also capture sequential structure of decisions.
Several prior approaches use gradient-based methods for solving games with particular structure. 
A prominent example is generative adversarial networks (GANs), though these are zero-sum games~\citep{goodfellow2014generative,jin2020local,nagarajan2017gradient,daskalakis2018limit,mertikopoulos2019learning,mescheder2017numerics}. Ideas from learning GANs have been adopted in gradient-based approaches to solve multi-player general-sum games~\citep{mazumdar2020gradient,balduzzi2018mechanics,chasnov2020convergence,ibrahim2020linear,letcher2020impossibility,lin2020finite,mertikopoulos2019learning}. However, all of these approaches assume a simultaneous-move game. 
A closely-related thread to our work considers gradient-based methods for bi-level 
optimization~\citep{li1987distributed,shaban2019truncated}. %\citet{fiez2020implicit}~and~\citet{wang2019solving} 
Several related efforts consider gradient-based learning in Stackelberg games, and also use the implicit function theorem to derive gradient updates~\citep{amin2016gradient,fiez2020implicit,nguyen2021partial,wang2019solving,wang21}. 
We significantly generalize these ideas by considering an arbitrary hierarchical game structure.

\citet{jia2021game} recently considered a stylized 3-level SHG for pandemic policy making, and proposed several non-gradient-based algorithms for this problem. 
We compare with their approach in Section~\ref{sec:exp}.

\section{Structured Hierarchical Games}\label{sec:framework}
\paragraph{Notation}
We use bold lower-case letters to denote vectors. 
Let $\F$ be a function of the form $\F(\bx, \by):\reals^{d}\times\reals^{d'}\rightarrow\reals^{d''}$. We use $\nabla_{\bx}\F$ to denote the partial derivative of $\F$ with respect to $\bx$. When there is functional dependency between $\bx$ and $\by$, we use $D_{\bx}\F$ to denote the total derivative of $\F(\bx, \by(\bx))$ with respect to $\bx$. We use $\nabla^2_{\bx, \bx}\F$ and $\nabla^2_{\bx, \by}\F$ to denote the second-order partial derivatives and $D^2_{\bx, \bx}\F$ to denote the second-order total derivative of $\F$. For a mapping $\F:\reals^d\rightarrow \reals^d$, we use $\F^t(\bx)$ to denote $t$ iterative applications of $\F$ on $\bx$. For mappings $\F_1:\reals^d\rightarrow \reals^d$ and $\F_2:\reals^d\rightarrow \reals^d$, we define $(\F_1\circ \F_2)(\bx)\triangleq \F_1(\F_2(\bx))$ and $(\F_1+ \F_2)(\bx)\triangleq \F_1(\bx)+\F_2(\bx)$.
Moreover, for a given $\epsilon \in \reals^{> 0}$ and $\bx\in \reals^d$, we define the $\epsilon$-ball around $\bx$ as $\ball(\bx)=\{\bx^\prime\in \reals^{d}\mid\|\bx-\bx^\prime\|_2<\epsilon\}$. Finally, 
$\bm{I}$ denotes an identity matrix.

\paragraph{Formal Model}
A structured hierarchical game (SHG) $\G$ consists of the set $\mathscr{N}$ of $n$ players. 
Each player $i$ is associated with a set of actions $\act_{i}\subseteq\mathbb{R}^{d_i}$.
The players are partitioned across $L$ levels, where $\mathscr{N}_l$ is the set of $n_l$ players occupying level $l$.
Let $l_i$ denote the level occupied by player $i$.
This \emph{hierarchical} structure of the game is illustrated in Figure~\ref{fig:model} where players correspond to nodes and levels are marked by dashed boundaries. The hierarchy plays two crucial roles: 1) it determines the order of moves, and 2) it partly determines utility dependence among players.
Specifically, the temporal pattern of actions is as follows: level 1 has a single player, the \textit{root}, who chooses an action first, followed by all players in level 2 making simultaneous choices, followed in turn by players in level 3, and so on until the \emph{leaves} in the final level $L$. 
Players of level $l$ only observe the actions chosen by all players of levels $1,2,...,l-1$, but not their peers in the same level. 
\begin{figure}[ht!]
	\centering
	\includegraphics[width=1\columnwidth]{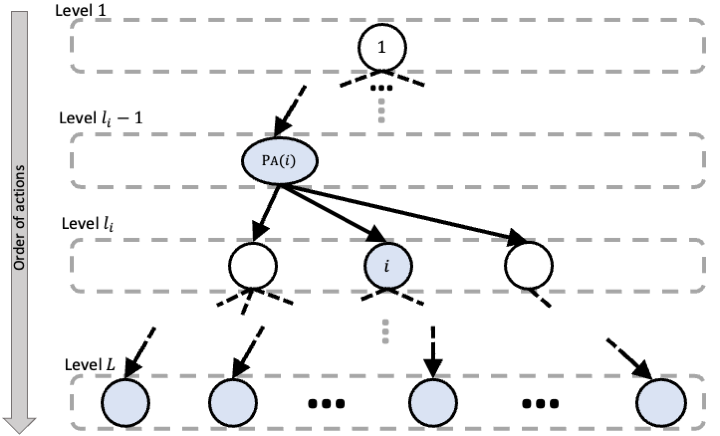}
	\caption{Schematic representation of an SHG. 
	The utility of player 
	$\li$
	can have direct functional dependence \emph{only} on the joint action of \emph{all} shaded players.}
	\label{fig:model}
\end{figure}
So, for example, pandemic social distancing and vaccination policies in the US are initiated by the federal government (including the Centers for Disease Control and Prevention who acts as the root in our game model), with states (second level nodes) subsequently instituting their own policies, counties (third level nodes) reacting to these by determining their own, and behavior of people (leaf nodes) ultimately influenced, but not determined, by the guidelines and enforcement policies by the local county/city.

Next, we describe the utility structure of the game as entailed by the SHG hierarchy.
Each player $i$ in level $l_i > 1$ (i.e., any node other than the root) has a \emph{unique parent} in level $l_i-1$; we denote the parent of node $i$ by $\pa(i)$.
A player's utility function is determined by 1) its own action, 2) the action of its parent, and 3) the actions of \emph{all} players in level $L$ (i.e., all \emph{leaf} players).
To formalize, let $\bx_l$ denote the joint action profile of all players in level $l$.
Player $i$'s utility function then has the form $\u_{i}(x_i, \bx_L)$ if $l_i = 1$, $\u_{i}(x_i, x_{\pa(i)}, \bx_L)$ if $1 < l_i < L$, and $\u_{i}(x_i, x_{\pa(i)}, \bx_{L,-i})$ if $l_i = L$, where $\bx_{L,-i}$ is the action profile of all players in level $L$ other than $i$.
For example, in our running pandemic policy example, the utility of a county depends on both the policy and enforcement strategy of its state (its \emph{parent}) and on the ultimate pandemic spread and economic impact within it, both determined largely by the behavior of the county residents (\emph{leaf nodes}).
Note the considerable generality of the SHG model.
For example, an arbitrary simultaneous-move game is a SHG with 2 levels and a ``dummy'' root node (utilities of all leaves depend on one another's actions), and an arbitrary Stackelberg game (e.g., Stackelberg security game), even with many followers, can be modeled as a 2-level SHG with the leader as root and followers as leaves.
Furthermore, while we have defined SHGs with respect to real-vector player action sets, it is straightforward to represent mixed strategies of finite-action games in this way by simply using a softmax function to map an arbitrary real vector into a valid mixed strategy.

\noindent{\bf Solution Concept }
Since an SHG has important sequential structure, it is natural to consider the \emph{subgame perfect equilibrium (SPE)} as the solution concept~\citep{osborne2004introduction}.
Here, we focus on pure-strategy equilibria.
To begin, we note that in SHGs, the strategies of players in any level $l>1$ are, in general, functions of the complete history of play in levels $1,\dots,l-1$, which we denote by $h_{<l}=(\bx_1, \bx_2,\dotsc,\bx_{l-1})$.
Formally, a (pure) strategy of a player $i$ is denoted by $s_i(h_{<l})$, which deterministically maps an arbitrary history $h_{<l}$ into an action $x_i \in \act_i$.
A \emph{Nash equilibrium} of an SHG is then a strategy profile $\bs =(s_{1},\dotsc,s_{i},\ldots,s_n)$ such that for all $i \in \mathscr{N}$, $u_i(s_i,\bm{s_{-i}}) \ge u_i(s_i',\bm{s_{-i}})$ for all possible alternative strategies for $i$, $s_i'$.
Here, we denote the realized payoff of $i$ from profile $\bm{s}$ by $u_i(s_i,\bm{s_{-i}})$.
Next, we define a \emph{level-$l$-subgame} given $h_{<l}$ as an SHG that includes only players at levels $\ge l$, with actions chosen in levels $<l$ fixed to $h_{<l}$.
A strategy profile $\bs$ is a \emph{subgame perfect equilibrium} of SHG if it is a Nash equilibrium of every level-$l$-subgame of SHG for every $l$ and history $h_{<l}$.
We prove in \ifcr the long version \else appendix~\ref{sec:app-spe} \fi that our definition of SPE is equivalent to the standard SPE in an extensive-form representation of SHG.

While in principle we can compute an SPE of an SHG using backward induction, this cannot be done directly (i.e., by complete enumeration of actions of all players) as actions are real vectors.
Moreover, even discretizing actions is of little help, as the hierarchical nature of the game leads to exponential explosion of the search space.
We now present a gradient-based approach for approximating SPE along the equilibrium path in an SHG that leverages the game structure to derive backpropagation-style gradient updates.

\section{Differential Backward Induction}\label{sec:algo}
In this section, we describe our gradient-based algorithm,  Differential Backward Induction (DBI), for approximating an SPE (which we mean hereinafter finding a joint-action profile $\bx$ that constitutes a subgame-perfect equilibrium path), and then analyze its convergence.
Just as gradient ascent does not, in general, identify a globally optimal solution to a non-convex optimization problem, DBI in general yields a solution which only satisfies first-order conditions (see Section~\ref{sec:analysis} for further details). 
Moreover, we leverage the structure of the utility functions to focus computation on an SPE in which strategies of players are only a function of their immediate parents.\footnote{Note that while we cannot guarantee that an SPE exists in SHGs in general, let alone those possessing the assumed structure, we find experimentally that our approach often yields good SPE approximations.}

In this spirit, we define \emph{local} best response functions $\phi_{i}:\reals^{d_{\pa(i)}}\rightarrow\reals^{d_i}$ mapping a player $i$'s parent's action $x_{\pa(i)}$ to $i$'s action $x_i$; note that the notation $\phi_i$ is distinct from $s_i$ above for $i$'s strategy to emphasize the fact that $\phi_i$ is only locally optimal.
Now, suppose that a player $i$ is in the last level $L$.
Local optimality of $\phi_i$ implies that if $x_i = \phi_i(x_{\pa(i)})$, then 
$\nabla_{x_{i}}\u_{i}\left(\bx_{i}, \bx_{\pa(i)}, \bx_{L,-i}\right)=0$ and $\nabla^2_{x_{i},x_{i}}\u_{i}\left(\bx_{i}, \bx_{\pa(i)}, \bx_{L,-i}\right) \prec 0.$\footnote{For simplicity, we omit degenerate cases where $\nabla^2_{\bx_{i},\bx_{i}}\u_{i}=0$ and assume all local maxima are strict.}

Let $\phi_{l}$
denote the local best response for all the players in level $l$ given the actions of all players in level $l-1$.
We can compose these local best response functions to define the function $\Phi_{l}:=\phi_{L}\circ\phi_{L-1}\circ\dotsc\circ\phi_{l+1}:\reals^{d_{n_l}}\rightarrow\reals^{d_{n_L}}$ i.e., the local best response of players in the last level $L$ given the actions of the players in level $l$.\footnote{Note that in particular $\Phi_L = \phi_L$.} 
Then for any player $(i)$ in level $l_i < L$, $D_{x_{i}}\u_{i}\left(x_i, x_{\pa(i)}, \Phi_{l}\left(\langle x_{i},\bx_{l,-i}\rangle\right)\right)=0$  and  $D^2_{x_{i},x_{i}}\u_{i}\left(x_{i}, x_{\pa(i)}, \Phi_{l}\left(\langle x_{i},\bx_{l,-i}\rangle\right)\right)\prec0$,
where $D_{x_{i}}$ is the total derivative with respect to $x_{i}$ (as $\Phi_{l}(\langle x_{i},\bx_{l,-i}\rangle)$ is also a function of $x_{i}$). 
Note that the functions $\phi$ and $\Phi$ are \emph{implicit}, capturing the functional dependencies between actions of players in different levels at the local equilibrium.

Throughout, we make the following standard assumption on the utility functions~\citep{dontchev2009implicit, wang2019solving}.
\begin{assumption}
\label{asp:1}
For any $x_{i}\in\X_{i}$, the second-order partial derivatives of the form $\nabla^2_{x_{i}, x_{i}}\u_{i}$ are non-singular.
\end{assumption}

\subsection{Algorithm}

The DBI algorithm works in a bottom-up manner, akin to back-propagation: for each level $l$, we compute the total derivatives (gradients) of the utility functions and local best response maps ($\phi$, $\Phi$) based on analytical expressions that we derive below.
We then propagate this information up to level $l-1$, as it is used to compute gradients for that level, and so on until level 1.
Algorithm~\ref{alg:dbw} gives the full DBI algorithm.
%, which assumes that total derivatives are given.
In this algorithm, $\ch(i)$ denotes the set of children of player $i$ (i.e., nodes in level $l_i+1$ for whom $i$ is the parent).
\begin{algorithm}[ht!]
\begin{algorithmic}
\INPUT{An SHG instance $\G$}
\PARAM{Learning rate $\alpha$, maximum number of iterations $T$ for gradient update}
\OUTPUT{A strategy profile}
\State 
Randomly initialize $\bx^0=\langle \bx_1^0, \ldots, \bx_L^0 \rangle$ 
%\Comment{Initialization} 
\For{$t=1,2,\dotsc, T$}
%\Comment{Loop over number of gradient update iterations}
    \For{$l=L, L-1, \dotsc, 1$} 
    %\Comment{Backward induction loop}
    		\For{$i=1,2,\dotsc, n_l$}
    		%\Comment{Loop over players in the level}
    		\If{$l=L$} 
    		%\Comment{Check for the last-level players}
    		\State Back-propagate 
    		%$\bm{I}$ as 
    		$D_{x_{i}}\Phi_{i} = \bm{I}$ to $\pa(i)$ %\Comment{$\Phi_{i} = \phi_{i}=x_{i}$}
    		\State Set $x^t_{i}\leftarrow x^{t-1}_{i}+\alpha \nabla_{x_{i}}u_{i}$ 
    		%\Comment{Gradient update step}
    		\Else
            \State Compute $\nabla_{x_{i}}u_{i}, \nabla_{\bx_L}u_{i}$ at $\bx^{t-1}$
             \State Compute $D_{x_{i}}\phi_{j}, \forall j \in\ch(i)$ (Eqn.~\eqref{eq:hessian_con})
    	    \State Compute $D_{x_{i}}\Phi_l$ 
    	    %using $D_{x_{j}}\Phi_{j}$ for $j\in \ch(i)$ 
    	     (Eqn.~\eqref{eq:bottom-up})
    	    %propagated below \Comment{ Equation~(\ref{eq:bottom-up})}
    		\State Back-propagate $D_{x_{i}}\Phi_l$ to $\pa(i)$
            \State Compute $D_{x_{i}}u_{i}=\nabla_{x_{i}}u_{i}+\nabla_{\bx_L}u_{i}D_{x_{i}}\Phi_l$
    		    \State Set $x^t_{i}\leftarrow x^{t-1}_{i}+\alpha D_{x_{i}}u_{i}$ %\Comment{Gradient update step}
    		    \EndIf
	  \EndFor
    	      \EndFor
\EndFor
\State Return $\bx^T$ %\Comment{Output after $T$ iterations}
\end{algorithmic}
\caption{Differential Backward Induction (DBI)}\label{alg:dbw}
\end{algorithm}
DBI works in a backward message-passing manner, comparable to back-propagation: after each player has computed its total derivative, it passes (back-propagates) $D_{x_{\li}}\Phi_l$ to its direct parent; this information is, in turn, used by the parent to compute its own total derivative, which is passed to its own parent, and so on.

Algorithm~\ref{alg:dbw} takes the total derivates as given.
We now derive closed-form expressions for these.
We start from the last level $L$. Given the actions of players in level $L-1$, 
the total derivative of a player $i \in \mathscr{N}_L$ with respect to $x_i$ is
\begin{equation}
\label{eq:last-layer}
D_{x_{i}}\u_{i}\left(x_{i}, x_{\pa(i)}, \bx_{L,-i}\right)=\nabla_{x_{i}}\u_{i}.
\end{equation}
For a player $i$ in level $L-1$, the total derivative (at a local best response) is
\begin{align}
 D_{x_{i}}\u_{i}(x_{i}, x_{\pa(i)}, &\phi_{L}(\langle x_{i},\bx_{L-1,-i}\rangle))
 \notag \\
&
=\nabla_{x_{i}}\u_{i}+\left(\nabla_{\bx_{L}}u_{i}\right)\left(D_{x_{i}}\phi_{L}\right), 
\label{eq:before-last-layer}
\end{align}
where $\nabla_{\bx_L}u_{i}$ is a $1\times d_{n_L}$ vector and $D_{x_{i}}\phi_{L}$ is a $d_{n_L}\times d$ matrix.
The technical challenge here is to derive the term
$D_{x_{i}}\phi_{L}$ for $i \in \mathscr{N}_{L-1}$.
%$D_{\bx_{L-1, i}}\phi_{L}$.
Recall that $\phi_{L}$ is the vectorized concatenation of the $\phi_{j}$ functions for $j \in \mathscr{N}_L$.
Since the local best response strategy of a player in level $L$ only depends on its parent in level $L-1$,
 the only terms in $\phi_{L}$ that depend on $x_{i}$ are the actions of $\ch(i)$ in level $L$.
Consequently, it suffices to derive $D_{x_{i}}\phi_{j}$ for  $j\in \ch(i)$. 
Note that for these players $j$, 
$\nabla_{x_{j}}u_{j}=0$ (by local optimality of $\phi_{L}$).
We will use this first-order condition to derive the expression for the total derivative using the \emph{implicit function theorem}.

\begin{theorem}[Implicit Function Theorem (IFT) {\cite[Theorem 1B.1]{dontchev2009implicit}}]
\label{thm:imp}.
Let $\F(\bx_1, \bx_2):\reals^{d}\times \reals^{d}\rightarrow\reals^{d}$ be a continuously differentiable function in a neighborhood of $(\bx_1^*, \bx_2^*)$ such that $\F(\bx_1^*, \bx_2^*)=0$.
Also suppose $\nabla_{\bx_2}f$, the Jacobian of $\F$ with respect to $\bx_2$, is non-singular at $(\bx_1^*, \bx_2^*)$.
Then around a neighborhood of $\bx_1^*$, we have a local diffeomorphism $\bx_2^*(\bx_1): \reals^{d}\rightarrow \reals^{d}$ such that
 $D_{\bx_1}\bx_2=-\left(\nabla_{\bx_2}f\right)^{-1}\nabla_{\bx_1}f$.
\end{theorem}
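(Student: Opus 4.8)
The final statement is the classical Implicit Function Theorem, so the plan is to give the standard self-contained proof via the Banach fixed-point theorem---also the route of the cited reference~\cite{dontchev2009implicit}---and then recover the Jacobian formula by implicit differentiation. Concretely, I would (i) recast the equation $\F(\bx_1,\cdot)=0$ as a fixed-point problem whose solution operator is a contraction near $\bx_2^*$, obtaining a unique local solution map $\bx_2^*(\bx_1)$; (ii) upgrade continuity of this map to continuous differentiability; and (iii) read off $D_{\bx_1}\bx_2$ by differentiating the identity $\F(\bx_1,\bx_2^*(\bx_1))\equiv 0$.

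For step (i), set $A=\nabla_{\bx_2}\F(\bx_1^*,\bx_2^*)$, which is invertible by hypothesis, and for each fixed $\bx_1$ define
\begin{equation*}
T_{\bx_1}(\bx_2)=\bx_2-A^{-1}\F(\bx_1,\bx_2),
\end{equation*}
so that $\bx_2$ is a fixed point of $T_{\bx_1}$ iff $\F(\bx_1,\bx_2)=0$. Since $\nabla_{\bx_2}T_{\bx_1}(\bx_2)=\bm{I}-A^{-1}\nabla_{\bx_2}\F(\bx_1,\bx_2)$ vanishes at $(\bx_1^*,\bx_2^*)$ and $\F$ is $C^1$, continuity of the derivative lets me pick radii $\delta,\rho>0$ with $\norm{\nabla_{\bx_2}T_{\bx_1}(\bx_2)}\le\tfrac12$ whenever $\norm{\bx_1-\bx_1^*}\le\delta$ and $\norm{\bx_2-\bx_2^*}\le\rho$; together with $\F(\bx_1^*,\bx_2^*)=0$ and continuity in $\bx_1$, shrinking $\delta$ makes $T_{\bx_1}$ map the closed $\rho$-ball into itself. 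Hence $T_{\bx_1}$ is a $\tfrac12$-contraction, and the Banach fixed-point theorem yields a unique $\bx_2^*(\bx_1)$ in that ball solving $\F(\bx_1,\bx_2^*(\bx_1))=0$ for every such $\bx_1$.

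For steps (ii)--(iii), the uniform contraction bound first gives a Lipschitz estimate, hence continuity of $\bx_2^*(\cdot)$. To obtain differentiability I would fix $\bx_1$ and a small increment $\bm{h}$, expand $\F$ to first order around $(\bx_1,\bx_2^*(\bx_1))$, and solve for $\bx_2^*(\bx_1+\bm{h})-\bx_2^*(\bx_1)$ using the fact that $\nabla_{\bx_2}\F$ stays nonsingular throughout the (shrunken) neighborhood by continuity; this shows the difference quotient converges. Differentiating $\F(\bx_1,\bx_2^*(\bx_1))\equiv 0$ by the chain rule then gives $\nabla_{\bx_1}\F+(\nabla_{\bx_2}\F)(D_{\bx_1}\bx_2)=0$, i.e.
\begin{equation*}
D_{\bx_1}\bx_2=-\left(\nabla_{\bx_2}\F\right)^{-1}\nabla_{\bx_1}\F,
\end{equation*}
and continuity of both factors shows $\bx_2^*\in C^1$ (and $C^k$ if $\F$ is $C^k$). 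The local-diffeomorphism claim, when needed, follows from the equivalent reduction to the Inverse Function Theorem applied to $G(\bx_1,\bx_2)=(\bx_1,\F(\bx_1,\bx_2))$, whose Jacobian is block-triangular with the invertible blocks $\bm{I}$ and $\nabla_{\bx_2}\F$.

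The routine part is existence and uniqueness: the contraction estimate is a direct consequence of $C^1$-ness and invertibility of $A$. The genuinely delicate step is upgrading continuity of the solution map to \emph{differentiability} and rigorously justifying the limit of the difference quotient---this is where one must use nonsingularity of $\nabla_{\bx_2}\F$ throughout a whole neighborhood (not merely at $(\bx_1^*,\bx_2^*)$) together with the uniform Lipschitz bound, rather than appealing only to the data at the base point.
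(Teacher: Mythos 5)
The paper offers no proof of Theorem~\ref{thm:imp} to compare against: it is a classical result imported verbatim by citation to \cite[Theorem 1B.1]{dontchev2009implicit} and used as a black box to derive the expressions for $D_{x_{\li}}\phi_{j}$. Your contraction-mapping argument is the standard and correct proof of this theorem---and is essentially the route taken in the cited reference itself---with the fixed-point construction, the self-map and contraction estimates, the differentiability upgrade (correctly flagged as the delicate step, requiring nonsingularity of $\nabla_{\bx_2}\F$ on a full neighborhood), and the chain-rule recovery of $D_{\bx_1}\bx_2=-\left(\nabla_{\bx_2}\F\right)^{-1}\nabla_{\bx_1}\F$ all in order.
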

To use Theorem~\ref{thm:imp}, we set $\F=\nabla_{x_{j}}u_{j}$ (which satisfies the conditions of Theorem~\ref{thm:imp} by Assumption~\ref{asp:1}), $\bx_1=x_{i}$ and $x_2=\bx_{j}$ (recall that $j \in \ch(i)$). 
By IFT, there exists $\phi_j(x_i)$ such that
$D_{{x_{i}}}\phi_{j}=-(\nabla_{x_{j}, x_{j}}^2\u_{j})^{-1}\nabla_{x_{j}, x_{i}}^2\u_{j}.$
Define $\nabla^2_{j} := \nabla_{x_{j}, x_{i}}^2\u_{j}$. Then 
\begin{align*}
\left(\nabla_{\bx_L}\u_{i}\right)\left(D_{x_{i}}\phi_{L}\right)&=-\sum_{j\in \ch(i)}\left(\nabla_{x_{j}}\u_{i}\right)D_{x_{i}}\phi_{j}
\\
&=-\sum_{j\in \ch(i)}\left(\nabla_{x_{j}}\u_{i}\right)(\nabla_{x_{j}, x_{j}}^2\u_{j})^{-1}\nabla^2_{j}.
\end{align*}
Plugging this into Equation~\eqref{eq:before-last-layer}, we obtain
\begin{align}
D_{x_{i}}\u_{i}&\left(x_{i}, x_{\pa(i)}, \phi_{L}\left(\bx_{L-1}\right)\right)\notag \\ &=\nabla_{x_{i}}\u_{i}-\sum_{j\in \ch(i)}\left(\nabla_{x_{j}}\u_{i}\right)(\nabla_{x_{j}, x_{j}}^2\u_{j})^{-1}\nabla^2_{j}. \label{eq:before-last-layer-2}
\end{align}

For a level $l<L-1$, the total derivative of player $i \in \mathscr{N}_l$ in a local best response is
$D_{x_{i}}\u_{i}=\nabla_{x_{\li}}\u_{\li}+\left(\nabla_{\bx_L}\u_{\li}\right)\left(D_{x_{\li}}\Phi_l\right),$
where

\begin{align}
D_{x_{\li}}\Phi_l&=\left(D_{\bx_{l+1}}\Phi_{l+1}\right)\left(D_{x_{\li}}\bx_{l+1}\right)\nonumber\\
&=\sum_{j\in\ch(i)}\left(D_{x_{j}}\Phi_{l+1}\right)\left(D_{x_{\li}}\phi_{j}\right).
\label{eq:bottom-up}
\end{align}
Applying IFT, we get
\begin{equation}
D_{x_{\li}}\phi_{j}=-(\nabla_{x_{j}, x_{j}}^2u_{j})^{-1}\nabla_{x_{j}, x_{\li}}^2u_{j},
\label{eq:hessian_con}
\end{equation}
for $j\in \ch(\li)$.
We can apply the above procedure recursively for $D_{\bx_{l+1}}\Phi_{l+1}$ to derive the total derivative for players $i \in \mathscr{N}_l$ for $l<L-1$:
\begin{align}
\label{eq:before-before-last}
D_{x_{\li}}\u_{\li} = \nabla_{x_{\li}}u_{\li} &+ \left(\sum\limits_{j\in \leaf(\li)}(-1)^{L-l}\nabla_{x_{j}}\u_{\li}\nonumber \right . \\  
&\left . \prod\limits_{\substack{\eta\in 
\path(j \rightarrow \li)}}\left(\nabla^2_{x_{\eta},x_{\eta}}\u_{\eta}\right)^{-1}\nabla^2_{x_{\eta},\bx_{\pa(\eta})}u_{\eta}\right),
\end{align}
where $\path(j\rightarrow i)$ is an ordered list of nodes (players) lying on the unique path from $j$ to $\li$, excluding $\li$. Note that Equation~\eqref{eq:before-before-last} is a generalization of Equation~\eqref{eq:before-last-layer-2} where the $\path$ only consists of the leaf player. 

While the above derivation assumes the $\phi$ and $\Phi$ functions are local best responses, in our algorithm in each iteration we evaluate these functional expressions for the total derivatives \emph{at the current joint action profile}. 
This significantly reduces computational complexity and ensures that Algorithm~\ref{alg:dbw} satisfies the first-order conditions upon convergence.

\subsection{Convergence Analysis}\label{sec:analysis}
As we remarked earlier, stable points of DBI are not guaranteed to be SPE just as stable points of gradient ascent are not guaranteed to be globally optimal with general non-convex objective functions.
Furthermore, DBI algorithm entails what are effectively iterative better-response updates by players, and it is well-known that best response dynamic processes in games will in general lead to cycles~\citep{mazumdar2020gradient}.

In spite of these challenges, we provide sufficient conditions for the DBI algorithm to converge to a stable point. In particular, in the rest of this section, we first show that the gradient updates of DBI can be written as a dynamical system and characterize the conditions in which this system will converge to an stable point (Proposition~\ref{prop:suff}). We then show how DBI can be tuned (in terms of learning rate in Proposition~\ref{thm: lr}, number of iterations in Proposition~\ref{thm: convergence} and initializations in Proposition~\ref{thm:measure}) to converge to such stable points when they exists. 
While the set of stable points and approximate SPEs are not necessarily the same, we empirically show that DBI is effective in converging to SPEs.

To begin, we observe that the gradient updates in
DBI can be interpreted as a discrete dynamical system, $\bx^{t+1}=F(\bx^t)$, with $F(\bx^t)=(\bm{I}+\alpha G)(\bx^{t})$ where $G$ is an update gradient vector.
This discrete system can be viewed as an approximation of a continuous limit dynamical system
$\dot{\bx}=G(\bx)$ 
(i.e., letting $\alpha\rightarrow0$).
A standard solution concept for such dynamical systems is a \emph{locally asymptotic stable point (LASP)}.

\begin{definition}[\cite{galor2007discrete}]
A continuous (or discrete) dynamical system $\dot{\bx}=G(\bx)$ (or $\bx^{t+1}=F(\bx^t)$) has a locally asymptotic stable point (LASP) $\bx^*$ if $\exists \epsilon>0, \lim_{t\rightarrow\infty}\bx^t=\bx^*, \forall \bx^0\in\ball(\bx^*)$.
\end{definition}

There are well-known necessary and sufficient conditions for the existence of an LASP.
\begin{proposition}[Characterization of LASP ~{\citep[Theorem~1.2.5, Theorem~3.2.1]{wiggins2003introduction}}]
A point $\bx^*$ is an LASP for the continuous dynamical system $\dot{\bx}=G(\bx)$ if $G(\bx^*)=0$ and all eigenvalues of Jacobian matrix $\nabla_{\bx}G$  at $\bx^*$ have negative real parts.
Furthermore, for any $\bx^*$ such that $G(\bx^*)=0$, if $\nabla_{\bx}G$ has eigenvalues with positive real parts at $\bx^*$, then 
$\bx^*$ cannot be an LASP.
\label{prop:suff}
\end{proposition}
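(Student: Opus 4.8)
The plan is to recognize this as the classical linearization result (Lyapunov's indirect method), and to prove the two claims separately using Lyapunov functions built from the Jacobian $A := \nabla_{\bx}G(\bx^*)$. Since $G(\bx^*)=0$, I would first translate coordinates via $\bm{\xi}=\bx-\bx^*$, rewriting the system as $\dot{\bm{\xi}}=A\bm{\xi}+R(\bm{\xi})$, where $C^1$-differentiability of $G$ guarantees a remainder with $\norm{R(\bm{\xi})}=o(\norm{\bm{\xi}})$ as $\bm{\xi}\to 0$. Both directions then reduce to showing that the linear term dictates behavior in a small neighborhood despite this nonlinear perturbation.

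For the \emph{sufficiency} claim (all eigenvalues of $A$ have negative real parts, i.e.\ $A$ is Hurwitz), the key lemma is that the matrix Lyapunov equation $A^{\top}P+PA=-\bm{I}$ admits a unique symmetric positive-definite solution, given explicitly by $P=\int_{0}^{\infty}e^{A^{\top}t}e^{At}\,dt$; convergence of this integral follows from the Hurwitz property, which forces $\norm{e^{At}}$ to decay exponentially. I would then take the candidate $V(\bm{\xi})=\bm{\xi}^{\top}P\bm{\xi}$, which is positive definite, and differentiate along trajectories to get $\dot V=\bm{\xi}^{\top}(A^{\top}P+PA)\bm{\xi}+2\bm{\xi}^{\top}P R(\bm{\xi})=-\norm{\bm{\xi}}^{2}+2\bm{\xi}^{\top}P R(\bm{\xi})$. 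Because the cross term is $o(\norm{\bm{\xi}}^{2})$, there is an $\epsilon>0$ with $\dot V<0$ for all nonzero $\bm{\xi}\in\ball(\bm{0})$, and Lyapunov's asymptotic stability theorem then yields $\lim_{t\to\infty}\bm{\xi}(t)=\bm{0}$ for every start in a (possibly smaller) neighborhood --- exactly the LASP property.

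For the \emph{instability} claim (some eigenvalue of $A$ has positive real part), I would split the spectrum to obtain an $A$-invariant decomposition $\reals^{d}=E^{s}\oplus E^{u}$ with $E^{u}\neq\{\bm{0}\}$, and apply Chetaev's instability theorem: solving the Lyapunov equation blockwise produces a quadratic form that is positive on a cone anchored at the origin and has a strictly positive derivative there, the expanding contribution of the $E^{u}$ block dominating $R$ near $\bm{0}$. This exhibits trajectories starting arbitrarily close to $\bx^{*}$ that leave a fixed neighborhood, so $\bx^{*}$ violates the ``$\forall \bx^{0}\in\ball(\bx^{*})$'' convergence requirement and cannot be an LASP. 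An equivalent route is the stable-manifold theorem: the set of initial points converging to $\bx^{*}$ lies in a manifold of dimension $\dim E^{s}<d$, hence has empty interior, so no full $\epsilon$-ball can converge.

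The main obstacle in both directions is the same: controlling the nonlinear remainder $R$ against the linear term, i.e.\ showing the Hurwitz (resp.\ expanding) behavior of $A$ survives the perturbation on a neighborhood. The Lyapunov/Chetaev function is precisely what purchases this, and its construction reduces to solving the matrix Lyapunov equation and checking definiteness, for which the integral representation of $P$ is the cleanest tool. A secondary subtlety is that the LASP definition demands convergence of \emph{every} nearby trajectory, so for instability it suffices to produce even a thin family of escaping initial conditions via $E^{u}$; note also that the statement only asserts a sufficient condition and its contrapositive, so the borderline case of purely imaginary eigenvalues never needs to be resolved and no center-manifold analysis is required.
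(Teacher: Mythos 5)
You should first know that the paper never proves this proposition at all: it is imported verbatim, with a citation to Wiggins's textbook, as a known characterization. So your attempt can only be judged against the standard arguments and against the machinery the paper uses elsewhere. Your \emph{sufficiency} half is exactly the standard proof (Lyapunov's indirect method): translating by $\bx^*$, solving $A^{\top}P+PA=-\bm{I}$ with $P=\int_0^{\infty}e^{A^{\top}t}e^{At}\,dt$, and absorbing the $o(\norm{\bm{\xi}}^{2})$ cross term to get $\dot V\le-\tfrac12\norm{\bm{\xi}}^{2}$ on a small ball. That half is correct and complete modulo routine details.

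The \emph{instability} half has a genuine gap, and it comes from the paper's definition of LASP, which asks only for \emph{attractivity}: the existence of an $\epsilon$-ball from which every trajectory converges to $\bx^*$; it does not ask for Lyapunov stability. Chetaev's theorem yields Lyapunov \emph{instability} --- orbits starting arbitrarily close to $\bx^*$ that exit a fixed ball --- but this does not contradict attractivity, because such orbits may exit and later return and converge; equilibria that are attractive yet Lyapunov-unstable exist (Vinograd's example). So your step ``this exhibits trajectories that leave a fixed neighborhood, hence $\bx^*$ violates the $\forall\,\bx^0$ requirement'' does not follow. Your fallback route via the stable manifold is the one that actually works, but as written it is also too quick: the local (center-)stable manifold only captures points whose \emph{entire forward orbit remains} in a small neighborhood, whereas a point in the basin may first wander far away. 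The repair: if $\bx(t)\rightarrow\bx^*$, then for some integer $N$ the orbit from time $N$ onward stays in the small neighborhood, so $\bx(N)\in W^{cs}_{loc}$ and $\bx(0)\in\phi_{-N}(W^{cs}_{loc})$; hence the basin is contained in $\bigcup_{N\in\nats}\phi_{-N}(W^{cs}_{loc})$, a countable union of diffeomorphic preimages of a manifold of dimension at most $d-1$, each Lebesgue-null, so the basin is null and has empty interior --- no $\epsilon$-ball can lie inside it. This countable-union-of-preimages argument is precisely what the paper itself deploys (in discrete time) to prove Proposition~\ref{thm:measure}, so it is the natural in-paper benchmark. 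One last slip: the second claim is not the ``contrapositive'' of the first (that would read: not LASP implies some eigenvalue has nonnegative real part); it is an independent assertion that genuinely requires the argument above --- though you are right that purely imaginary eigenvalues never need to be resolved.
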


Note that an LASP of DBI is an action profile of all players that satisfies the first-order conditions, i.e., it has the property that no player can improve their utility through a local gradient update.
While the existence of an LASP depends on game structure, we show that under Assumption~\ref{asp:1}, and as long as the sufficient conditions for LASP existence in Proposition~\ref{prop:suff} are satisfied, DBI converges to LASP. 
We defer all the omitted proofs to the long version.\ifcr \else Appendix~\ref{sec:app-exp}.\fi
\begin{proposition}
Let $\lambda_1,\dotsc,\lambda_d$ denote the eigenvalues of the updating Jacobian $\nabla_{\bx}G$ at an LASP $\bx^*$ and define $\lambda^*=\arg\max_{i\in[d]}Re(\lambda_i)/\abs{\lambda_i}^2$, where $Re$ is the real part operator.
Then with a learning rate $\alpha<-2Re(\lambda^*)/\abs{\lambda^*}^2$, and an initial point $\bx^0\in\mathbb{B}_\epsilon(\bx^*)$ for some $\epsilon>0$ around $\bx^*$, DBI converges to an LASP. Specifically, if the choice of learning rate equals $\alpha^*$ and the modulus of matrix $\rho(\bm{I}+\alpha^*\nabla_{\bx} G)=1-\kappa<1$, then the dynamics converge to $\bx^*$ with the rate of $O((1-\kappa/2)^t)$.\label{thm: lr}
\end{proposition}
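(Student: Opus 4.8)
The plan is to read DBI's update as the discrete system $F = \bm{I} + \alpha G$ and invoke the discrete-time counterpart of the LASP characterization in Proposition~\ref{prop:suff}: a fixed point $\bx^*$ is a (discrete) LASP whenever the spectral radius of the Jacobian $\nabla_{\bx}F(\bx^*)$ is strictly below $1$, and convergence then holds from some $\ball(\bx^*)$. First I would note that $\bx^*$ is indeed a fixed point, since $F(\bx^*) = \bx^* + \alpha G(\bx^*) = \bx^*$ because $G(\bx^*)=0$ at any LASP of the continuous flow. Differentiating gives $\nabla_{\bx}F(\bx^*) = \bm{I} + \alpha\nabla_{\bx}G$, so its eigenvalues are exactly $1 + \alpha\lambda_i$. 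Thus the whole problem is to choose $\alpha$ so that every $1+\alpha\lambda_i$ lies strictly inside the unit disk.

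The eigenvalue bookkeeping is the heart of the first claim. For each $i$,
\[
\abs{1+\alpha\lambda_i}^2 = 1 + 2\alpha\, Re(\lambda_i) + \alpha^2\abs{\lambda_i}^2,
\]
so $\abs{1+\alpha\lambda_i} < 1$ is equivalent to $2\alpha\, Re(\lambda_i) + \alpha^2\abs{\lambda_i}^2 < 0$, and since $\alpha>0$ this is $\alpha < -2\,Re(\lambda_i)/\abs{\lambda_i}^2$. Because $\bx^*$ is an LASP of the continuous system, Proposition~\ref{prop:suff} guarantees $Re(\lambda_i)<0$ for every $i$, so each of these upper bounds is strictly positive. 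Requiring the inequality to hold simultaneously for all $i$ means $\alpha$ must be below the smallest such bound, i.e. $\alpha < \min_i\big(-2\,Re(\lambda_i)/\abs{\lambda_i}^2\big) = -2\,Re(\lambda^*)/\abs{\lambda^*}^2$ with $\lambda^* = \arg\max_i Re(\lambda_i)/\abs{\lambda_i}^2$, exactly the stated threshold. Under this choice $\rho(\bm{I}+\alpha\nabla_{\bx}G)<1$, so $\bx^*$ is a discrete LASP and DBI converges from a neighborhood.

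For the rate, set $A = \bm{I} + \alpha^*\nabla_{\bx}G$ with $\rho(A) = 1-\kappa$. I would use the standard fact (Gelfand / Householder) that the spectral radius is the infimum of induced operator norms: for any small margin $\delta$ there is a norm $\norm{\cdot}_*$ with $\norm{A}_* \le 1-\kappa/2-\delta$, which is attainable because $1-\kappa = \rho(A) < 1-\kappa/2$. Taylor-expanding $F$ about the fixed point gives $F(\bx)-\bx^* = A(\bx-\bx^*) + R(\bx)$ with $\norm{R(\bx)}_* = o(\norm{\bx-\bx^*}_*)$, so on a small enough $\ball(\bx^*)$ the remainder contributes at most $\delta\norm{\bx-\bx^*}_*$. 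Telescoping the resulting one-step contraction $\norm{\bx^{t+1}-\bx^*}_* \le (1-\kappa/2)\norm{\bx^{t}-\bx^*}_*$ yields $\norm{\bx^t-\bx^*} = O\big((1-\kappa/2)^t\big)$.

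The step I expect to be the main obstacle is closing this rate estimate cleanly. The spectral radius only controls the asymptotic linear rate, and $A$ need not be normal or even diagonalizable, so $\norm{A^t}$ in a fixed norm may carry polynomial prefactors from Jordan blocks; simultaneously the nonlinear remainder $R$ must be dominated. Spending half of the spectral gap $\kappa$ as slack is precisely what lets a single adapted norm absorb both the Jordan-block prefactor and the higher-order term, producing the clean $1-\kappa/2$ base rather than the sharp but prefactor-laden $(1-\kappa)^t$. Making the norm choice and the ball radius consistent so that the contraction actually closes is the delicate bookkeeping here.
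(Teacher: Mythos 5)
Your proposal is correct and follows essentially the same route as the paper's own proof: both compute the eigenvalues $1+\alpha\lambda_i$ of $\bm{I}+\alpha\nabla_{\bx}G$, use $Re(\lambda_i)<0$ to turn $\abs{1+\alpha\lambda_i}<1$ into the stated threshold on $\alpha$, and then establish the $O((1-\kappa/2)^t)$ rate by picking a matrix norm within $\kappa/4$ of the spectral radius (your Householder/Gelfand fact is exactly the paper's citation of Horn and Johnson, Lemma 5.6.10) and absorbing the Taylor remainder with the remaining slack to get a local contraction with constant $1-\kappa/2$. Your explicit derivation of the step-size threshold as an equivalence, and your remark on why the adapted norm is needed to handle non-normal Jacobians, are slightly cleaner than the paper's presentation but not a different argument.
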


Proposition~\ref{thm: lr} states that there exists a region such that, if the initial point is in that region, then DBI will converge to an LASP. We next show that if we assume first-order Lipschitzness for the update rule, then we can also characterize the region of initial points which converge to an LASP.

\begin{proposition}
Suppose $G$ is $L$-Lipschitz.\footnote{Formally, this means that $\exists L > 0$ such that $\forall \bx, \bx^\prime \in\mathcal{X}, \|G(\bx)-G(\bx^\prime)\|_2\le L\|\bx-\bx^\prime\|_2$.} Then for all $ \bx^0\in\mathbb{B}_{\kappa/2L}(\bx^*)$, $\epsilon > 0$ and after $T$ rounds of gradient update, DBI will output a point $\bx^T\in \ball(\bx^*)$ as long as $T\ge\lceil \frac{2}{\kappa}\log\norm{\bx^0-\bx^*}/\epsilon\rceil$ where $\kappa$ is as defined in Proposition~\ref{thm: lr}.
\label{thm: convergence}
\end{proposition}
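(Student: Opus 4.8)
The plan is to treat $\bx^*$ as a fixed point of the one-step map $F=\bm{I}+\alpha^* G$ and to bootstrap the asymptotic rate already established in Proposition~\ref{thm: lr} into an explicit, non-asymptotic iteration count, using the Lipschitz constant to pin down a concrete forward-invariant neighborhood. First I would record that $G(\bx^*)=0$ at an LASP, so $F(\bx^*)=\bx^*$ and the error $\bm{e}^t:=\bx^t-\bx^*$ satisfies $\bm{e}^{t+1}=\bm{e}^t+\alpha^*\bigl(G(\bx^t)-G(\bx^*)\bigr)$. Writing the increment through the integral mean-value form $G(\bx^t)-G(\bx^*)=\bigl(\int_0^1 \nabla_{\bx}G(\bx^*+s\bm{e}^t)\,ds\bigr)\bm{e}^t$ turns one step into a linear map $\bm{e}^{t+1}=J^t\bm{e}^t$ with $J^t=\bm{I}+\alpha^*\int_0^1\nabla_{\bx}G(\bx^*+s\bm{e}^t)\,ds$, whose value at $\bm{e}^t=0$ is exactly the Jacobian $\bm{I}+\alpha^*\nabla_{\bx}G(\bx^*)$ of Proposition~\ref{thm: lr}.

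The core step is to show that on the ball $\mathbb{B}_{\kappa/2L}(\bx^*)$ every $J^t$ contracts with factor at most $1-\kappa/2$. Proposition~\ref{thm: lr} gives $\rho(\bm{I}+\alpha^*\nabla_{\bx}G(\bx^*))=1-\kappa$, so I would first pass to an adapted norm in which the operator norm of this base Jacobian is within $\kappa/2$ of its spectral radius, i.e.\ bounded by $1-\kappa/2$ up to arbitrarily small slack. The $L$-Lipschitzness of $G$ then controls how far $J^t$ can drift from its base value as $\bm{e}^t$ ranges over the ball: because the radius is scaled by exactly $\tfrac{1}{L}$, the admissible deviation of the linearization over $\mathbb{B}_{\kappa/2L}(\bx^*)$ is a bounded multiple of $\kappa$, which I would arrange to keep $\norm{J^t}\le 1-\kappa/2$ throughout. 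In particular $\norm{\bm{e}^{t+1}}\le(1-\kappa/2)\norm{\bm{e}^t}<\norm{\bm{e}^t}$, so the ball is forward-invariant and the contraction applies at every iterate, not merely asymptotically.

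Granting this per-step bound, I would unroll it to $\norm{\bx^t-\bx^*}\le(1-\kappa/2)^t\norm{\bx^0-\bx^*}$ and solve for the smallest $T$ making the right-hand side fall below $\epsilon$. Requiring $(1-\kappa/2)^T\norm{\bx^0-\bx^*}<\epsilon$ and taking logarithms gives $T>\log(\norm{\bx^0-\bx^*}/\epsilon)\big/\bigl(-\log(1-\kappa/2)\bigr)$; applying the elementary bound $-\log(1-x)\ge x$ with $x=\kappa/2$ replaces the denominator by $\kappa/2$ and yields the stated sufficiency of $T\ge\lceil \tfrac{2}{\kappa}\log(\norm{\bx^0-\bx^*}/\epsilon)\rceil$, with the case $\norm{\bx^0-\bx^*}\le\epsilon$ being immediate.

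The main obstacle I anticipate is the second paragraph: reconciling the spectral-radius information of Proposition~\ref{thm: lr}, which only governs the asymptotic rate, with the demand for a uniform one-step contraction on an explicitly sized ball. Spectral radius is not submultiplicative in the Euclidean norm, so the adapted-norm device is essential. The more delicate point is that $L$-Lipschitzness of $G$ bounds $\norm{\nabla_{\bx}G}$ by $L$ but does not, on its own, bound the \emph{variation} of $\nabla_{\bx}G$ across the ball; keeping the drift of $J^t$ within the $\kappa/2$ budget therefore requires either exploiting additional regularity implicit in the smoothness already assumed on the utilities, or a direct estimate of $\norm{G(\bx^t)-G(\bx^*)-\nabla_{\bx}G(\bx^*)\bm{e}^t}$, together with converting any norm-equivalence constants back to the Euclidean ball named in the statement.
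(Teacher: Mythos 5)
Your proposal follows essentially the same route as the paper's proof: both reuse the machinery of Proposition~\ref{thm: lr} (the adapted matrix norm obtained from the spectral radius $1-\kappa$, the Taylor expansion of $\bm{I}+\alpha G$ about $\bx^*$, and the resulting per-step contraction factor $1-\kappa/2$), both use the Lipschitz hypothesis to convert the qualitative radius $\delta$ of that proof into the explicit radius $\kappa/2L$, and both finish by unrolling $\norm{\bx^T-\bx^*}\le(1-\kappa/2)^T\norm{\bx^0-\bx^*}$ and bounding $-\log(1-\kappa/2)\ge\kappa/2$ to extract the stated $T$.

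The obstacle you flag in your last paragraph --- that $L$-Lipschitzness of $G$ bounds $\norm{\nabla_{\bx}G}$ but not the \emph{variation} of $\nabla_{\bx}G$ across the ball --- is a genuine observation, and it is exactly the point at which the paper's own proof is loose. The paper bounds the residual by
\begin{align*}
\norm{R_1(\bx-\bx^*)}\le\int_0^1\norm{\nabla_{\bx}G\bigl(\bx^*+\tau(\bx-\bx^*)\bigr)-\nabla_{\bx}G(\bx^*)}\,\norm{\bx-\bx^*}\,d\tau\le\frac{L}{2}\norm{\bx-\bx^*}^2,
\end{align*}
and the second inequality is valid only if $L$ is a Lipschitz constant for the Jacobian $\nabla_{\bx}G$ (i.e., $G$ is $L$-smooth), not merely for $G$ itself as the footnote of the statement asserts; the proof silently uses this stronger reading. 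Once it is granted, requiring $\frac{L}{2}\norm{\bx-\bx^*}^2\le\frac{\kappa}{4}\norm{\bx-\bx^*}$ gives precisely the radius $\kappa/2L$ --- this is the ``direct estimate of $\norm{G(\bx)-G(\bx^*)-\nabla_{\bx}G(\bx^*)(\bx-\bx^*)}$'' you anticipated as one way out. Two smaller bookkeeping points: the paper allocates the slack as $\kappa/4$ for the adapted norm (so the base Jacobian has norm at most $1-3\kappa/4$) plus $\kappa/4$ for the residual, which sums to the contraction factor $1-\kappa/2$; your allocation of $\kappa/2$ of slack to the base Jacobian alone would leave no budget for the drift term, so it must be tightened to match. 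Finally, the norm-equivalence issue you raise (the adapted matrix norm versus the Euclidean ball named in the statement) is real but is likewise glossed over by the paper.
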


\begin{figure*}[ht!]
\centering
\subfigure[]{\includegraphics[width=0.685\columnwidth]{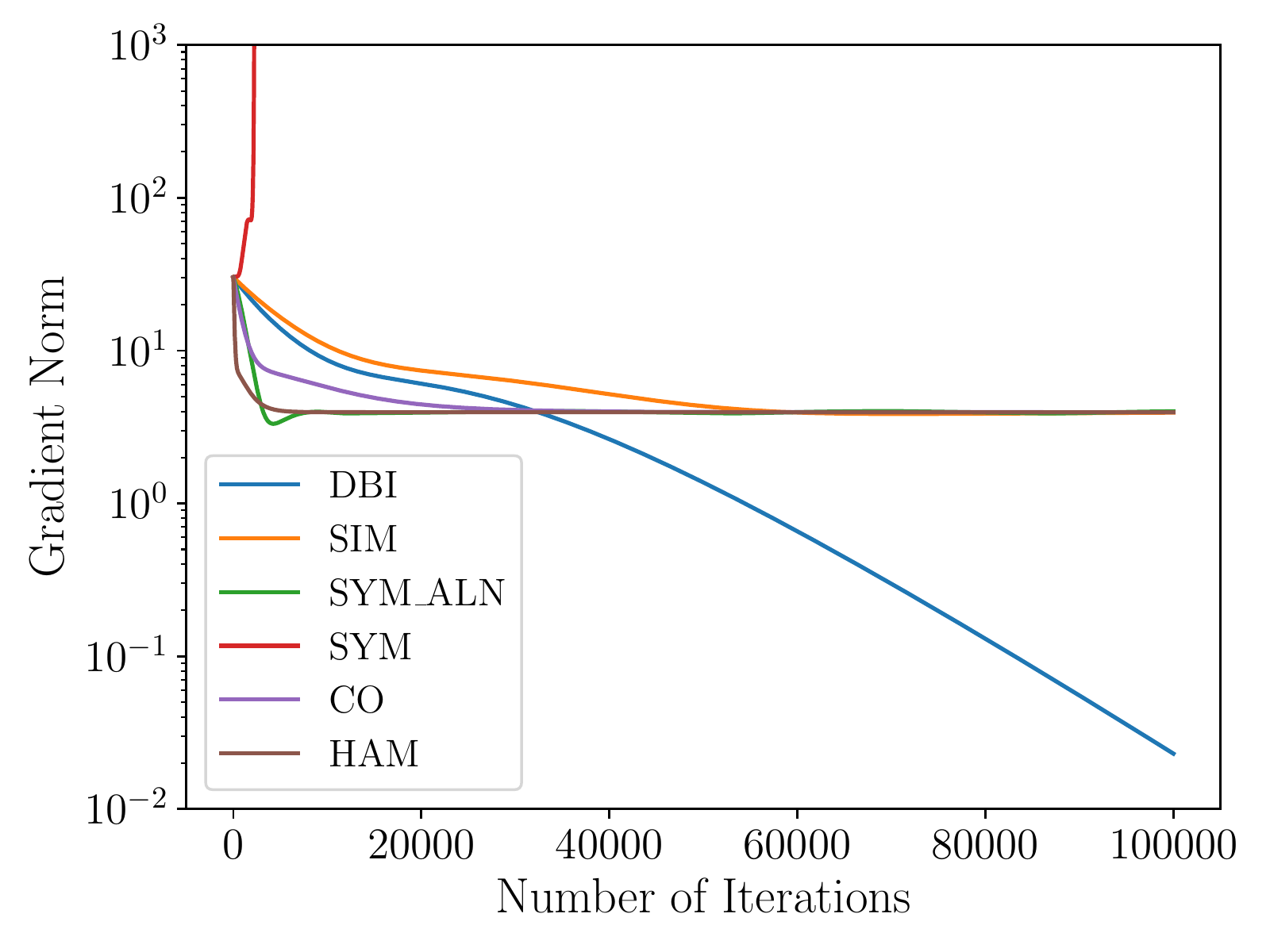}}
\subfigure[]{\includegraphics[width=0.685\columnwidth]{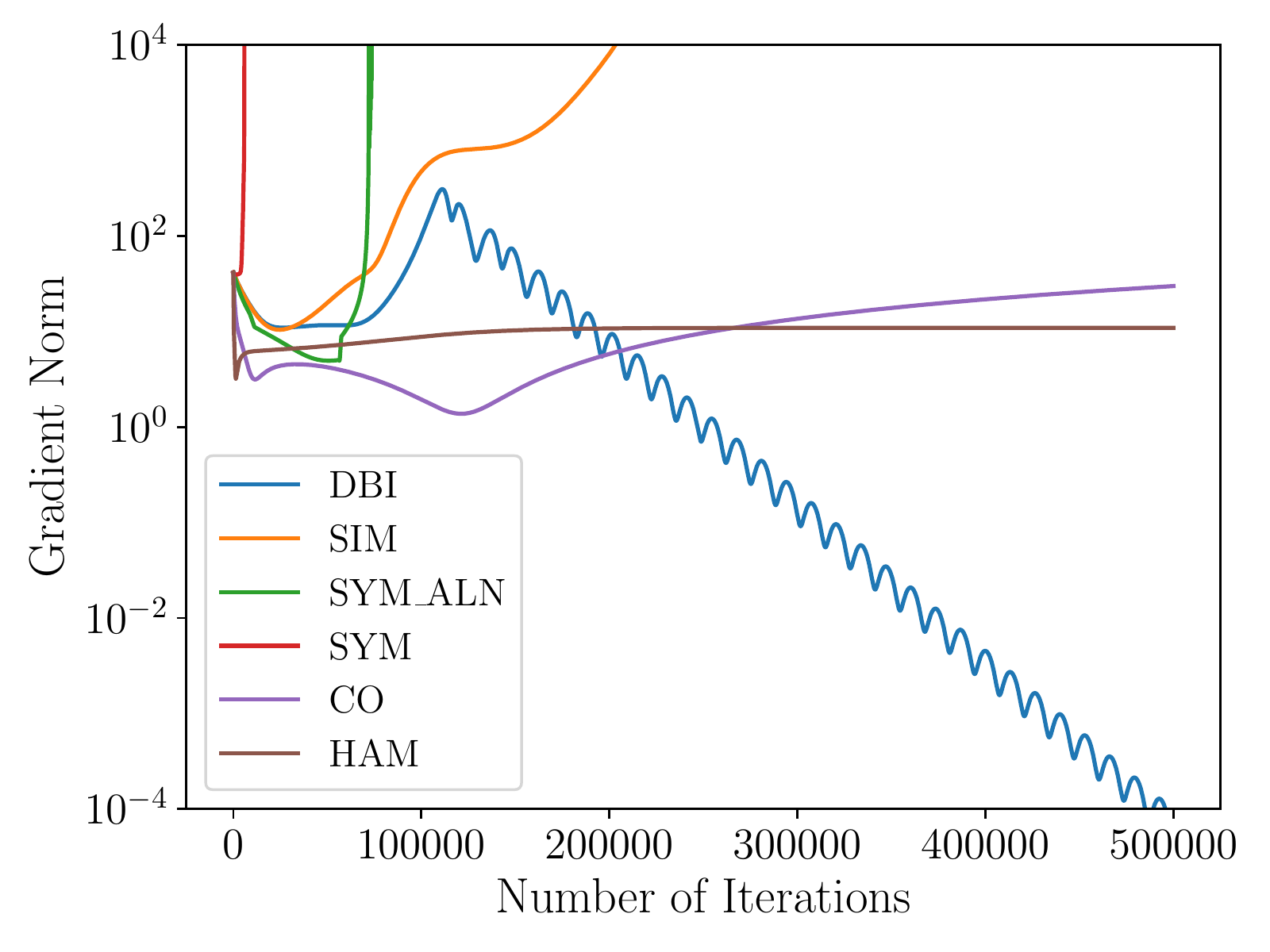}}
\subfigure[]{\includegraphics[width=0.685\columnwidth]{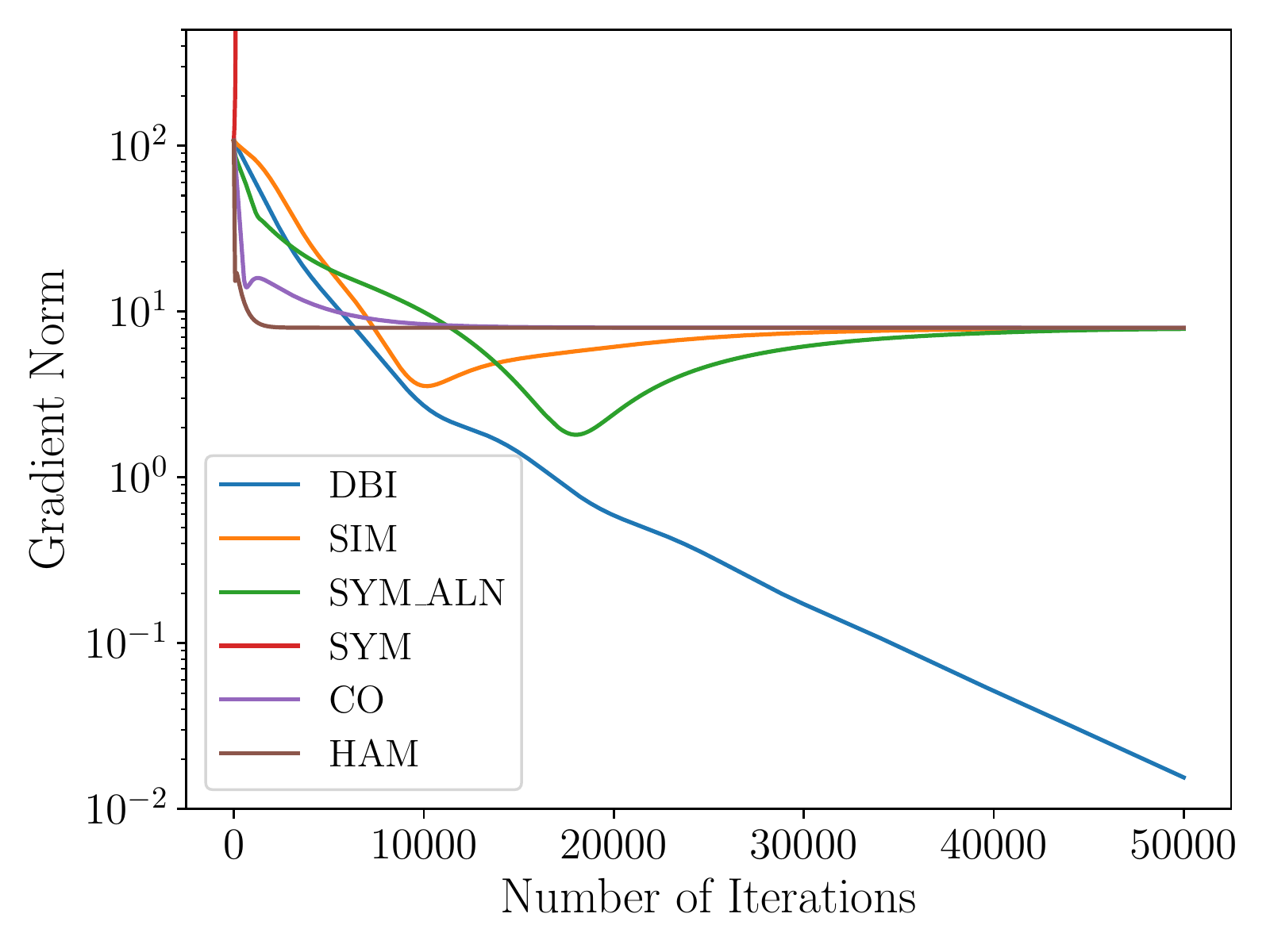}}
\caption{Convergence behaviors on (a) a $(1, 1, 1)$ game with 1-d actions (b) a $(1, 1, 2)$ game with 1-d actions (c) a $(1, 1, 1)$ game with 3-d actions. 
}
\label{fig:conv}
\end{figure*}

We further show that through random initialization, the probability of reaching a \textit{saddle point} is 0, which means that with probability 1, DBI converges to an LASP in which players are playing \emph{local} best responses.
\begin{proposition}
Suppose $G$ is $L$-Lipschitz. 
Let $\alpha<1/L$ and define the saddle points of the dynamics $G$ as $\mathcal{X}^*_{sad}=\{\bx^*\in\mathcal{X} \mid \bx^*=(\bm{I}+\alpha G)(\bx^*),\rho((\bm{I}+\alpha \nabla_{\bx}G)(\bx^*))>1\}$. Also let $\mathcal{X}^0_{sad}=\{\bx^0\in\mathcal{X} \mid \lim_{t\rightarrow\infty}(\bm{I}+\alpha G)^t(\bx^0)\in \mathcal{X}^*_{sad}\}$ denote the set of initial points that converge to a saddle point.
Then $\mu(\mathcal{X}^0_{sad})=0$, where $\mu$ is Lebesgue measure.\label{thm:measure}
\end{proposition}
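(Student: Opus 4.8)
The plan is to follow the now-standard ``gradient methods almost surely avoid strict saddles'' argument, in the spirit of stable-manifold analyses of discrete dynamical systems. Write the one-step update as the map $g := \bm{I} + \alpha G$, so that the trajectory is $\bx^{t} = g^{t}(\bx^{0})$ and $\mathcal{X}^*_{sad}$ is exactly the set of fixed points of $g$ at which the Jacobian $Dg = \bm{I} + \alpha \nabla_{\bx} G$ has spectral radius strictly greater than $1$, i.e.\ possesses at least one eigenvalue of modulus $>1$ (an unstable direction). The three ingredients I would assemble are: (i) $g$ is a diffeomorphism of $\reals^{d}$ onto its image; (ii) around each such saddle the set of points whose forward orbit converges to it is locally contained in a center-stable manifold of dimension at most $d-1$, hence of Lebesgue measure zero; and (iii) the global stable set $\mathcal{X}^0_{sad}$ is a countable union of $g$-preimages of these local manifolds, and diffeomorphisms carry null sets to null sets.

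First I would establish that $g$ is a diffeomorphism, which is where the hypotheses $\alpha < 1/L$ and $L$-Lipschitzness of $G$ enter. Since $\alpha G$ is $\alpha L$-Lipschitz with $\alpha L < 1$, the map $g$ is injective: if $g(\bx)=g(\bx')$ then $\|\bx-\bx'\| = \alpha\|G(\bx')-G(\bx)\| \le \alpha L\|\bx-\bx'\|$, forcing $\bx = \bx'$. Moreover every eigenvalue of $\alpha\nabla_{\bx}G$ has modulus at most $\alpha L < 1$, so the eigenvalues of $Dg = \bm{I}+\alpha\nabla_{\bx}G$ lie in the open disk of radius $1$ centered at $1$ and in particular avoid $0$; thus $Dg$ is everywhere nonsingular and, by the inverse function theorem together with injectivity, $g$ is a $C^1$ diffeomorphism onto $g(\reals^d)$ with $C^1$ (hence locally Lipschitz) inverse.

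Next I would invoke the Center-Stable Manifold Theorem at each saddle $\bx^* \in \mathcal{X}^*_{sad}$: there is a neighborhood $U_{\bx^*}$ and an embedded manifold $W^{cs}_{loc}(\bx^*) \subseteq U_{\bx^*}$ whose dimension equals the number of eigenvalues of $Dg(\bx^*)$ of modulus $\le 1$, such that any orbit staying in $U_{\bx^*}$ and converging to $\bx^*$ must lie in $W^{cs}_{loc}(\bx^*)$. Because $\bx^*$ is a strict saddle it has at least one eigenvalue of modulus $>1$, so $\dim W^{cs}_{loc}(\bx^*) \le d-1$ and therefore $\mu(W^{cs}_{loc}(\bx^*)) = 0$. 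By second countability of $\reals^d$ the saddles can be covered by countably many such neighborhoods, yielding countably many local manifolds $\{W^{cs}_{loc}(\bx^*_k)\}_k$.

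Finally, any $\bx^0 \in \mathcal{X}^0_{sad}$ has $g^t(\bx^0) \to \bx^*_k$ for some $k$, so for all large $t$ the tail orbit lies in some $U_{\bx^*_k}$ and hence in $W^{cs}_{loc}(\bx^*_k)$; equivalently $\bx^0 \in g^{-t}\!\left(W^{cs}_{loc}(\bx^*_k)\right)$ for some $t$ and $k$. Thus $\mathcal{X}^0_{sad} \subseteq \bigcup_{k}\bigcup_{t\ge 0} g^{-t}\!\left(W^{cs}_{loc}(\bx^*_k)\right)$. Since $g^{-1}$ is locally Lipschitz it maps null sets to null sets, so each $g^{-t}(W^{cs}_{loc}(\bx^*_k))$ is null, and a countable union of null sets is null, giving $\mu(\mathcal{X}^0_{sad}) = 0$. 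I expect the main obstacle to be the bookkeeping in this global step: justifying the countable cover of the saddle set and verifying that genuine convergence (not mere accumulation) forces the orbit into the local center-stable manifold, together with the diffeomorphism property from the second paragraph that is needed for preimages of null sets to remain null.
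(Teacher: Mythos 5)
Your proposal is correct and follows essentially the same route as the paper's own proof: establish that $\bm{I}+\alpha G$ is a diffeomorphism (injectivity via the contraction bound $\alpha L<1$, nonsingular Jacobian via the spectral radius bound), invoke the center-stable manifold theorem at each strict saddle, use second countability (Lindel\"of) to extract a countable cover, and conclude since locally Lipschitz inverses carry null sets to null sets and countable unions of null sets are null. The only cosmetic difference is that the paper isolates the equivalence between Lipschitzness of $G$ and the bound $\norm{\nabla_{\bx}G}\le L$ as a separate lemma, a fact you use implicitly when bounding the eigenvalues of $\alpha\nabla_{\bx}G$.
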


While our convergence analysis does not guarantee convergence to an approximate SPE, our experiments show that DBI is in fact quite effective in doing so in practice.

\begin{figure*}[ht!]
\centering
\subfigure[]{\includegraphics[width=0.68\columnwidth]{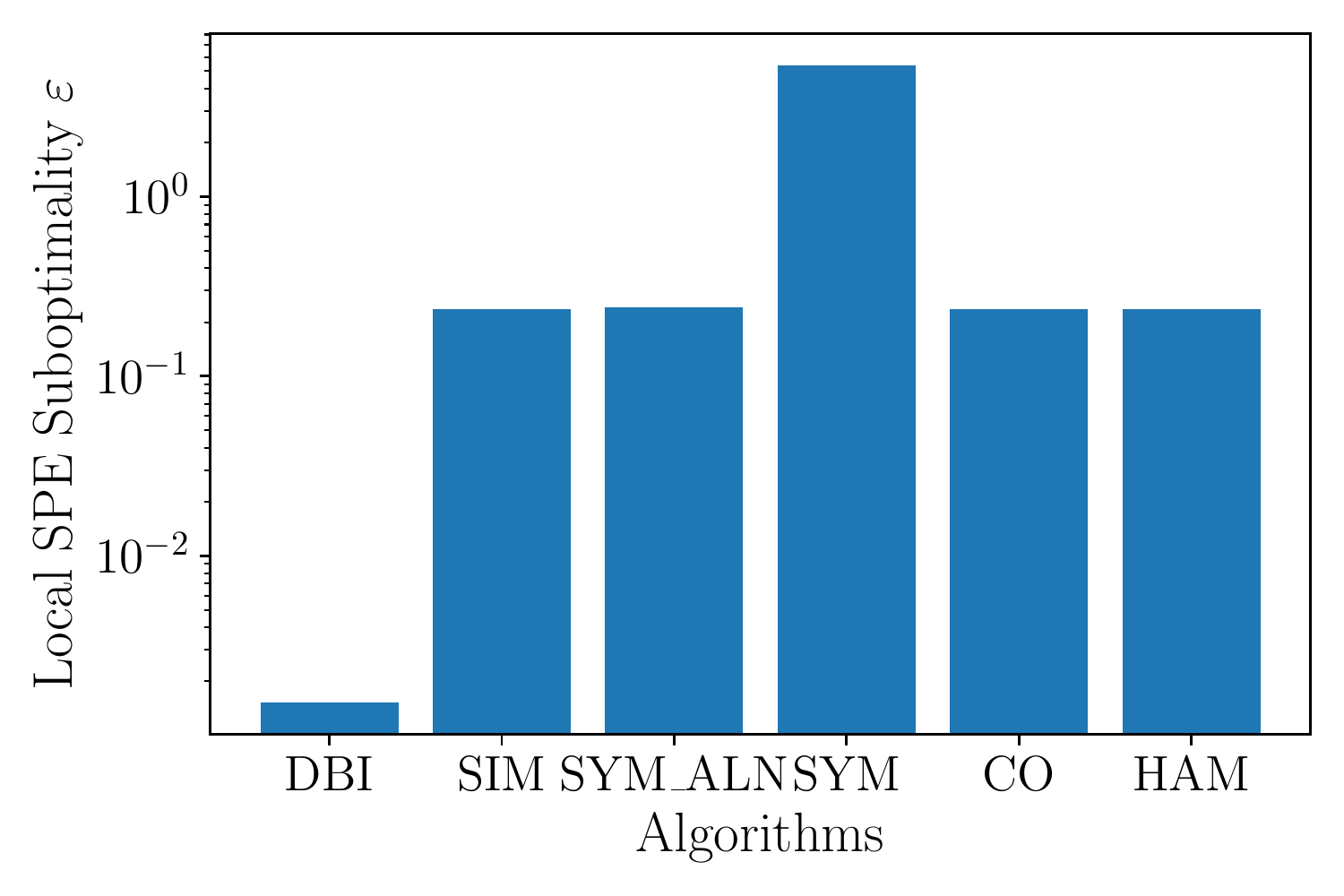}}
\subfigure[]{\includegraphics[width=0.68\columnwidth]{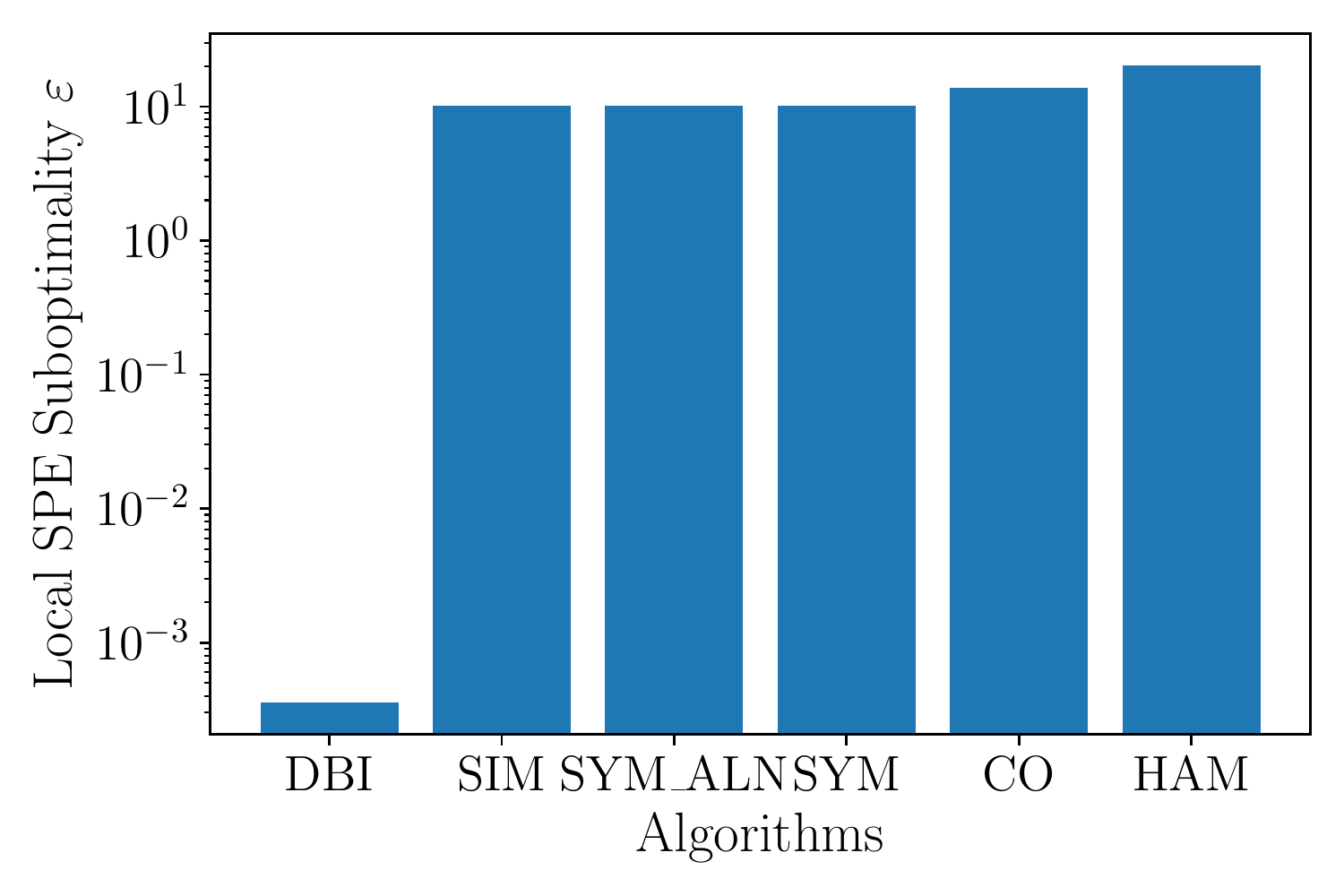}}
\subfigure[]{\includegraphics[width=0.68\columnwidth]{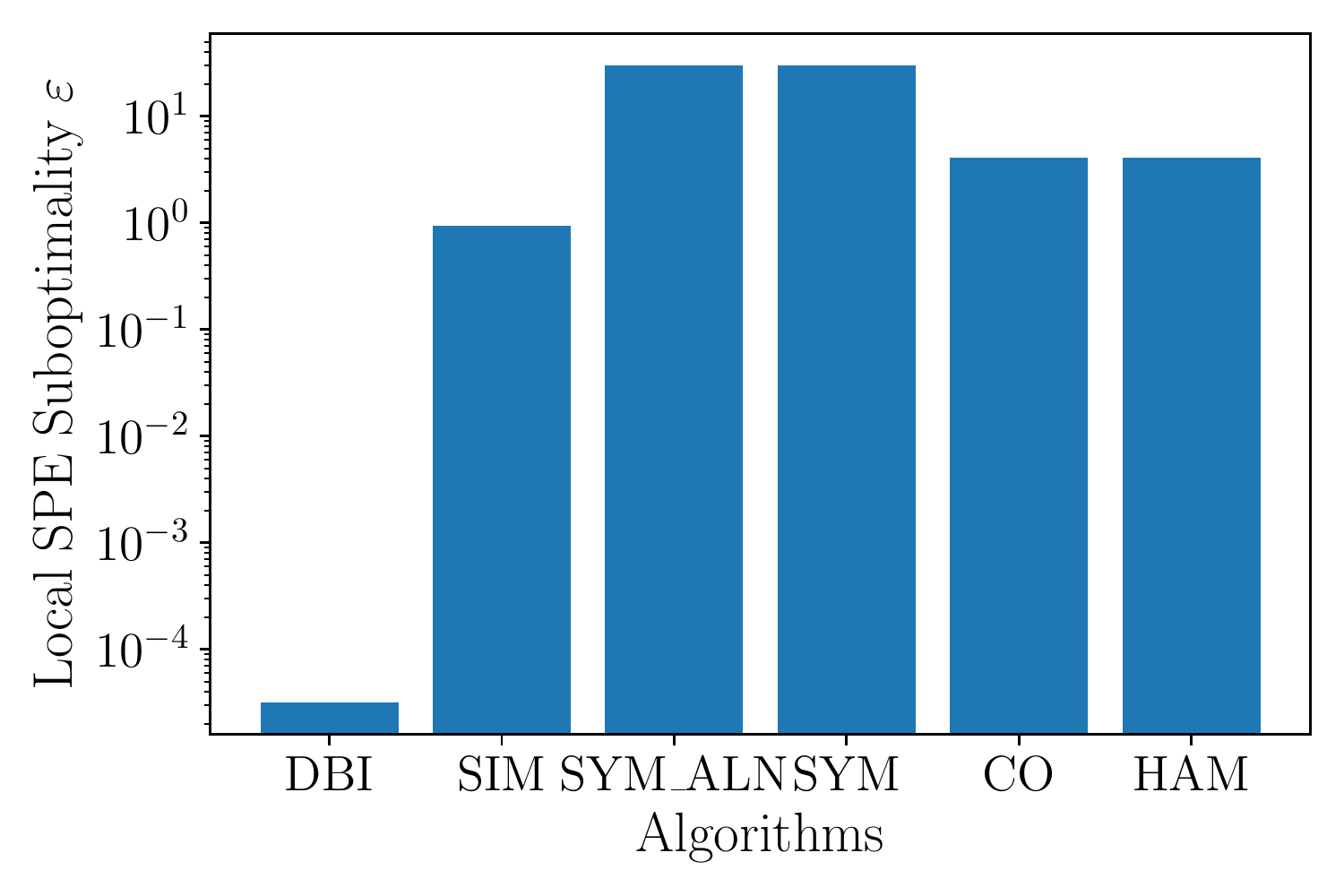}}
\caption{Solution qualities on (a) a $(1, 1, 1)$ game with 1-d actions (b) a $(1, 1, 2)$ game with 1-d actions (c) a $(1, 1, 1)$ game with 3-d actions. 
}
\label{fig:local-eps}
\end{figure*}

\section{Experiments}\label{sec:exp}

In this section, we empirically investigate the following questions: (1) the convergence rate of DBI, (2) the solution quality of DBI, (3) the behavior of DBI in games where we can verify global stability.
All our code is written in python. %and attached.
We ran our experiments on an Intel(R) Core(TM) i7-7700HQ CPU @ 2.80GHz to obtain the results in Sections~\ref{sec:polygames}, and on an Intel(R) Core(TM) i9-9820X CPU @ 3.30GHz for the rest of the experiments.~\footnote{Code available at \url{https://github.com/jtongxin/SHG_DBI}.}

We evaluate the performance in terms of quality of equilibrium approximation as a function of the number of iterations of a given algorithm, or its running time.
Ideally, given a collection of actions $\bm{x}$ played by players along the (approximate) equilibrium path computed, we wish to find the largest utility gain any player can have by deviating from this path, which we denote by $\epsilon(\bm{x})$.
However, this computation is impossible in our setting, as it would need to consider all possible histories as well, whereas our approach and alternatives only return $\bm{x}$ along the path of play (moreover, considering all possible histories is itself intractable).

Therefore, we consider two heuristic alternatives.
The first, which we call \emph{local SPE regret}, runs DBI for every player $i$ starting with $\bm{x}$, and returns the greatest benefit that any player can thereby obtain; we use this in Section~\ref{sec:polygames}.
In the rest of this section, we use the second alternative, which we call \emph{global SPE regret}. It considers for each player $i$ in level $l$ a discrete grid of alternative actions, and uses best response dynamics to compute an approximate SPE of the level-$(l+1)$ subgame to evaluate player $i$'s utility for each such deviation.
This approach then returns the highest regret among all players computed in this way.

Our evaluation considers three SHG scenarios.
We begin by comparing DBI to a number of baselines on simple, stylized SHG models, then
move on to three complex hierarchical game models motivated by concrete applications.
\subsection{Polynomial Games}
\label{sec:polygames}

We begin by considering instances of SHGs to which we can readily apply several state-of-the-art baselines, allowing us a direct comparison to previous work.
Specifically, we consider 3 SHG instances with different game properties: (a) a three-level chain structure (or the $(1, 1, 1)$ game) with 1-d actions (b) a ``$\Yup$" shape tree (or the $(1, 1, 2)$ game) with 1-d action spaces, and (c) and $(1, 1, 1)$ game with 3-d actions. 
In all the games, the payoffs are polynomial functions of $\bx$ with randomly generated coefficients (we can think of these as proxies for a Taylor series approximation of actual utility functions). 
The exact coefficient of these polynomial functions as well as an analysis of the running time of each method can be found in \ifcr the long version. \else Appendix~\ref{sec:app-exp}. \fi 

We compare DBI with the following five baselines: 1) simultaneous partial gradient ascent (SIM)~\citep{chasnov2020convergence,mazumdar2020gradient}, 2) symplectic gradient dynamics with or 3) without alignment (SYM\_ALN and SYM, respectively)~\citep{balduzzi2018mechanics}, 4) consensus optimization (CO) \citep{mescheder2017numerics}, and 5) Hamilton gradient (HAM)~\citep{loizou2020stochastic,abernethy2021last}.
SIM, SYM\_ALN, SYM, CO and HAM are all designed to compute a local Nash equilibrium~\cite{balduzzi2018mechanics,chasnov2020convergence}.

We start by comparing convergence behavior of DBI to the baselines.
We run all algorithm with the same initial point and learning rate. 
The results are in Figure~\ref{fig:conv} where we plot the $L_2$ norm of total gradient for each of the algorithms (Y axis)  against the number of iterations (X axis).

In all cases, DBI converges to a critical point that meets the first-order conditions while the baseline algorithms fail to do so in most cases. In Figures~\ref{fig:conv}(a) and (c),  all baselines have converged to a point with finite norm for the total gradients.
In (b), however, only CO and HAM converge to a stationary point while SIM, SYM, SYM\_ALN all diverge. For scenario (b),  DBI appears to be on an inward spiral to a critical point.
We further check the second-order condition (see \ifcr the long version\else  Appendix~\ref{sec:app-exp}\fi) and verify that DBI has actually converged to local maxima of individual payoffs in all three games. 

Next, we investigate solution quality in terms of \emph{local regret} of DBI compared to baselines.
As shown in Figure~\ref{fig:local-eps},
across all three game instances, DBI outputs a profile of actions (along the path of play) with near-zero local regret while other algorithm fail to do so. 

\subsection{Decentralized Epidemic Policy Game}\label{sec:exp3}

Next, we consider DBI for solving a class of games inspired by hierarchical decentralized policy-making in the context of epidemics such as COVID-19~\citep{jia2021game}.
The hierarchy has levels corresponding to the (single) federal government, multiple states, and county administrations under each state.
Each player's action (policy) is a scalar in $[0, 1]$ that represents, for example, the extent of social distancing recommended or mandated by a player (e.g., a state) for its administrative subordinates (e.g., counties).
Crucially, these subordinates have considerable autonomy about setting their own policies, but incur a non-compliance cost for significantly deviating from recommendations made by the level immediately above (of course, non-compliance costs are not relevant for the root player).
The full cost function of each player additionally includes an infection prevalence within the geographic territory of interest to the associated entity (e.g., within the state), as well as the socio-economic cost of the policy itself. To summarize, the total cost for each player is a combination of the infection cost, socio-economic cost as well as the non-compliance cost (when applicable). However, different players can have different combinations of these cost (through player-specific weights for each of the costs) that can lead to strategic tensions between the players (see \ifcr the long version \else Appendix~\ref{sec:app-exp}\fi for details).

\begin{figure}[ht!]
\begin{tabular}{ll}
\includegraphics[width=0.91\columnwidth]{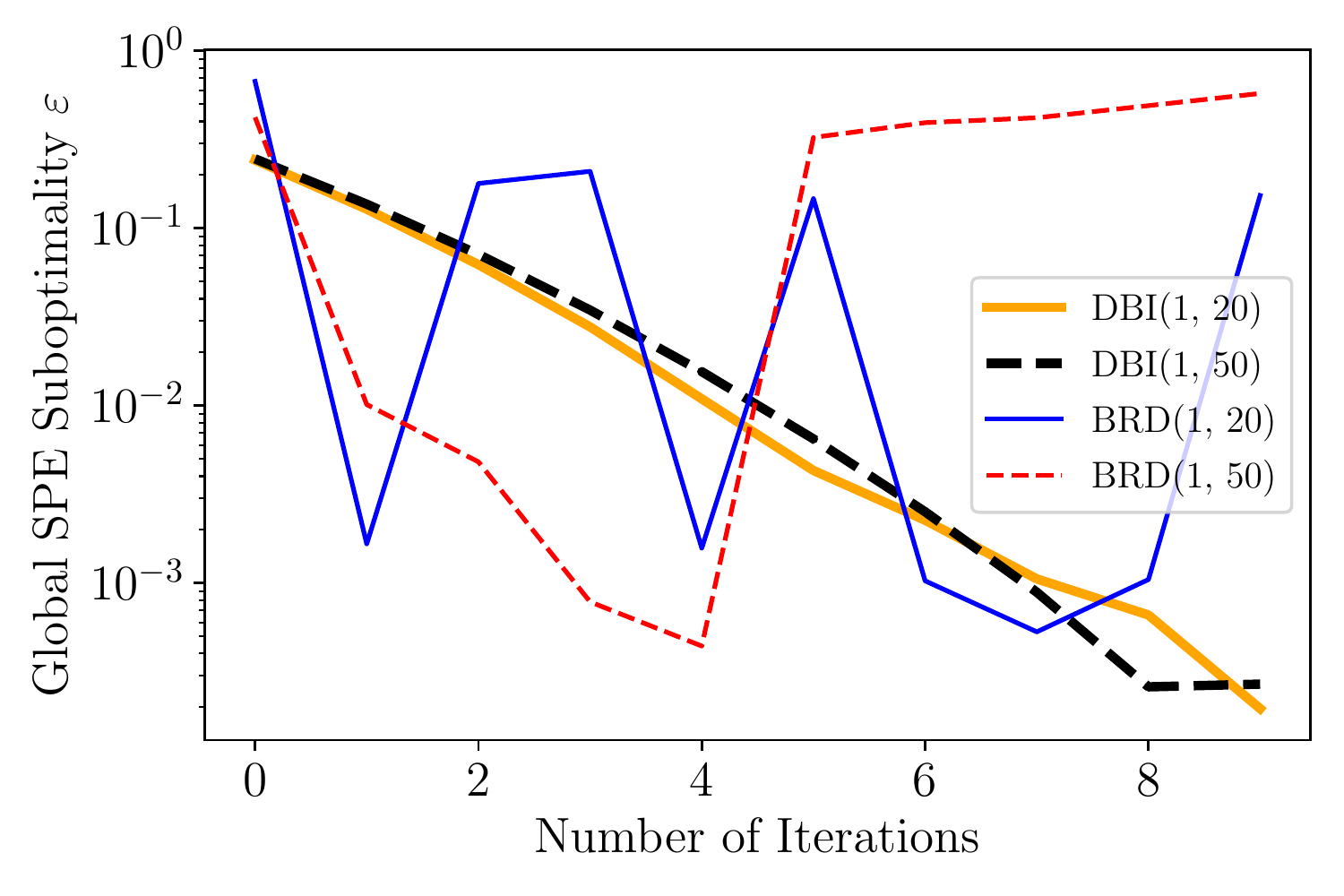}\\
\includegraphics[width=0.9\columnwidth]{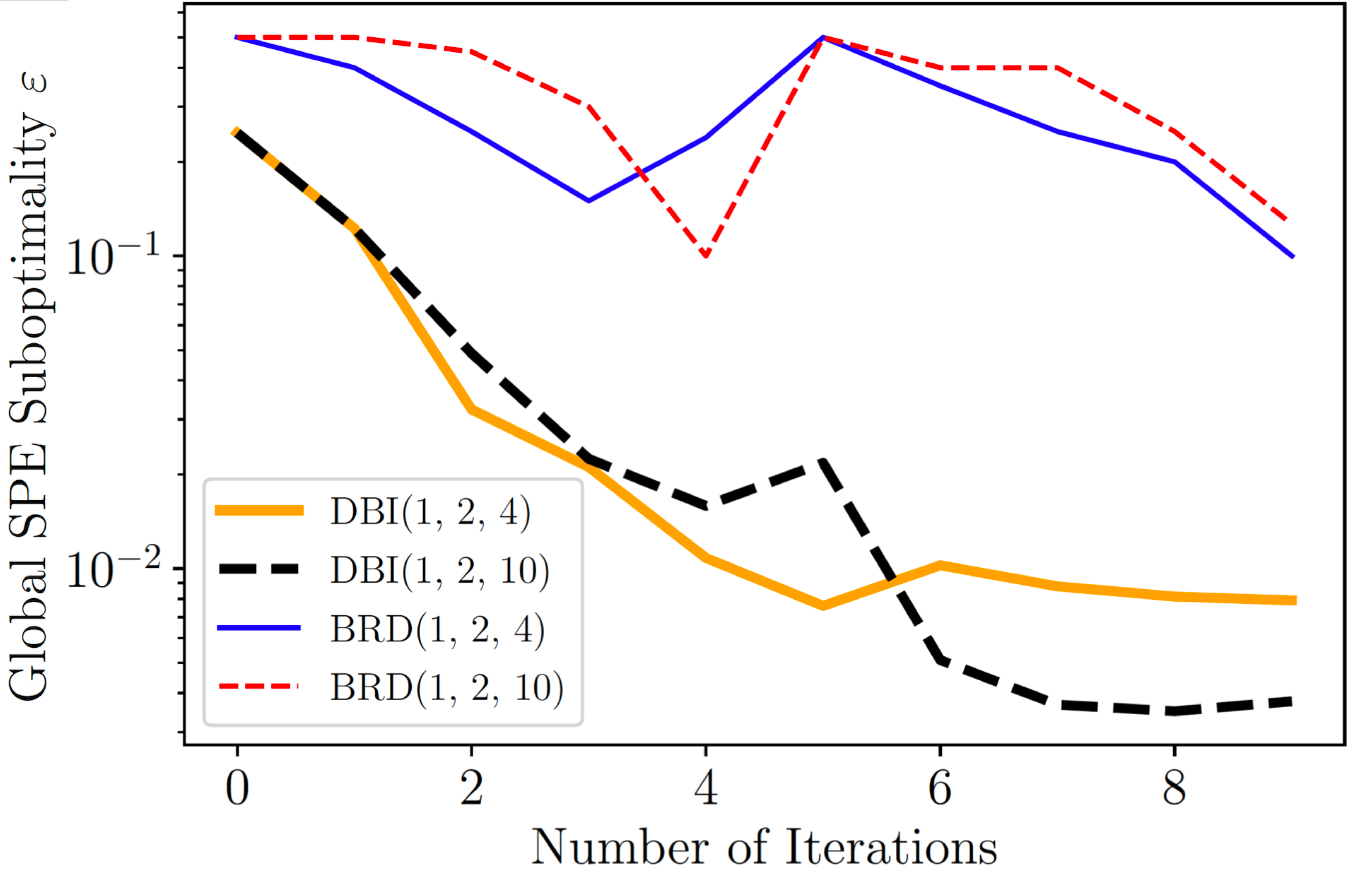}
\end{tabular}
\caption{Global regret for the decentralized epidemic policy game.  Top and bottom columns correspond to result for games with 2 and 3 levels, respectively.}
\label{fig:eps1}
\end{figure}

Since the actions are in a one-dimensional compact space and the depth of the hierarchy is at most 3, 
our baseline is the best response dynamics (BRD) algorithm proposed by \citet{jia2021game} (detailed in \ifcr the long version\else Appendix~\ref{sec:app-exp}\fi), and we use \emph{global regret} as a measure of efficacy in comparing the proposed DBI algorithm with BRD.
The results of this comparison are shown in Figures~\ref{fig:eps1}~and~\ref{fig:eps2} for two-level (government and states) and three-level (government, states, counties) variants of this game. We consider two-level games with 20 and 50 leaves (states), and three-level games with 2 players in level 2 (states) and 4 and 10 leaves (counties).

\begin{figure}[ht!]
\begin{tabular}{ll}
\includegraphics[width=0.9\columnwidth]{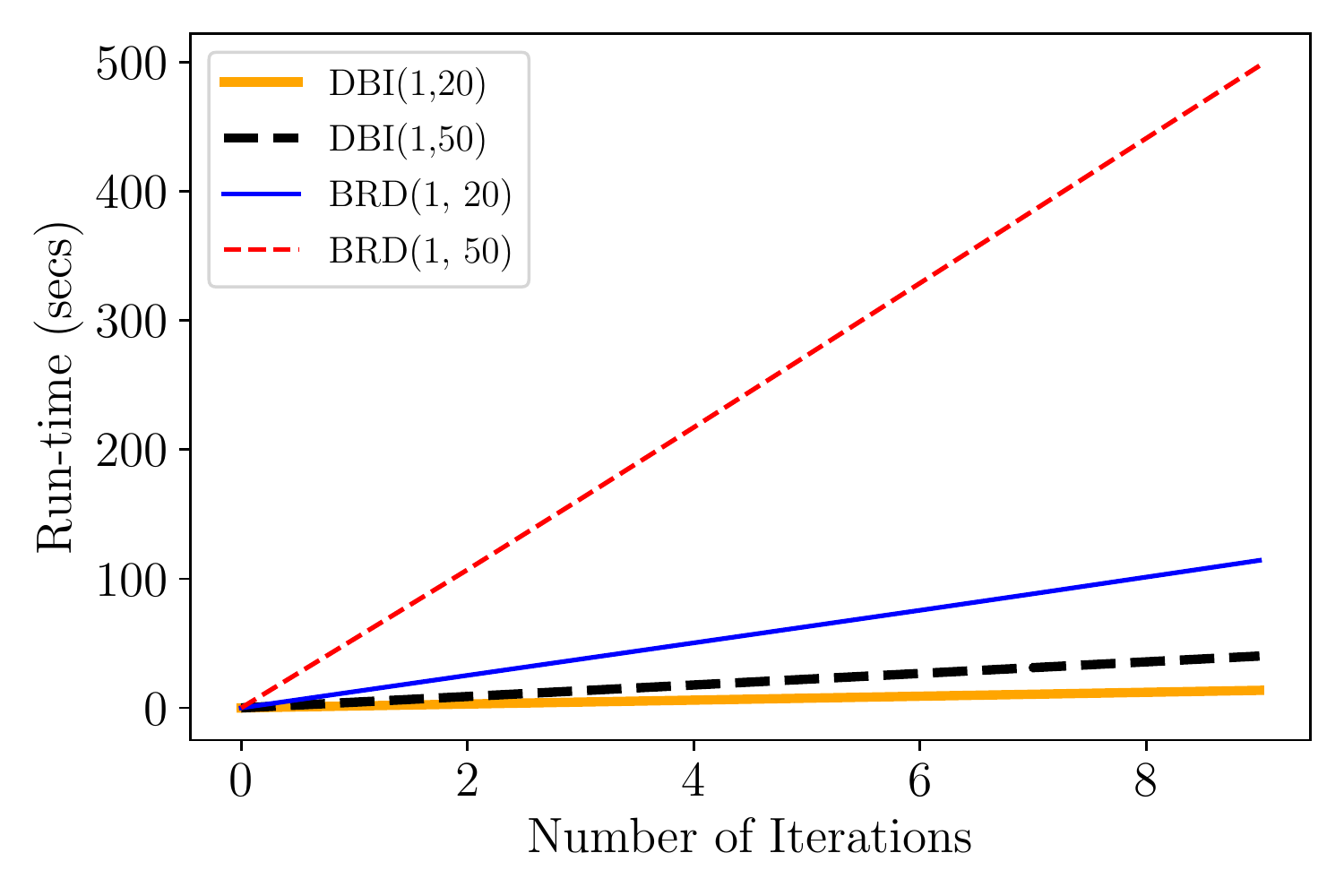}\\
\includegraphics[width=0.9\columnwidth]{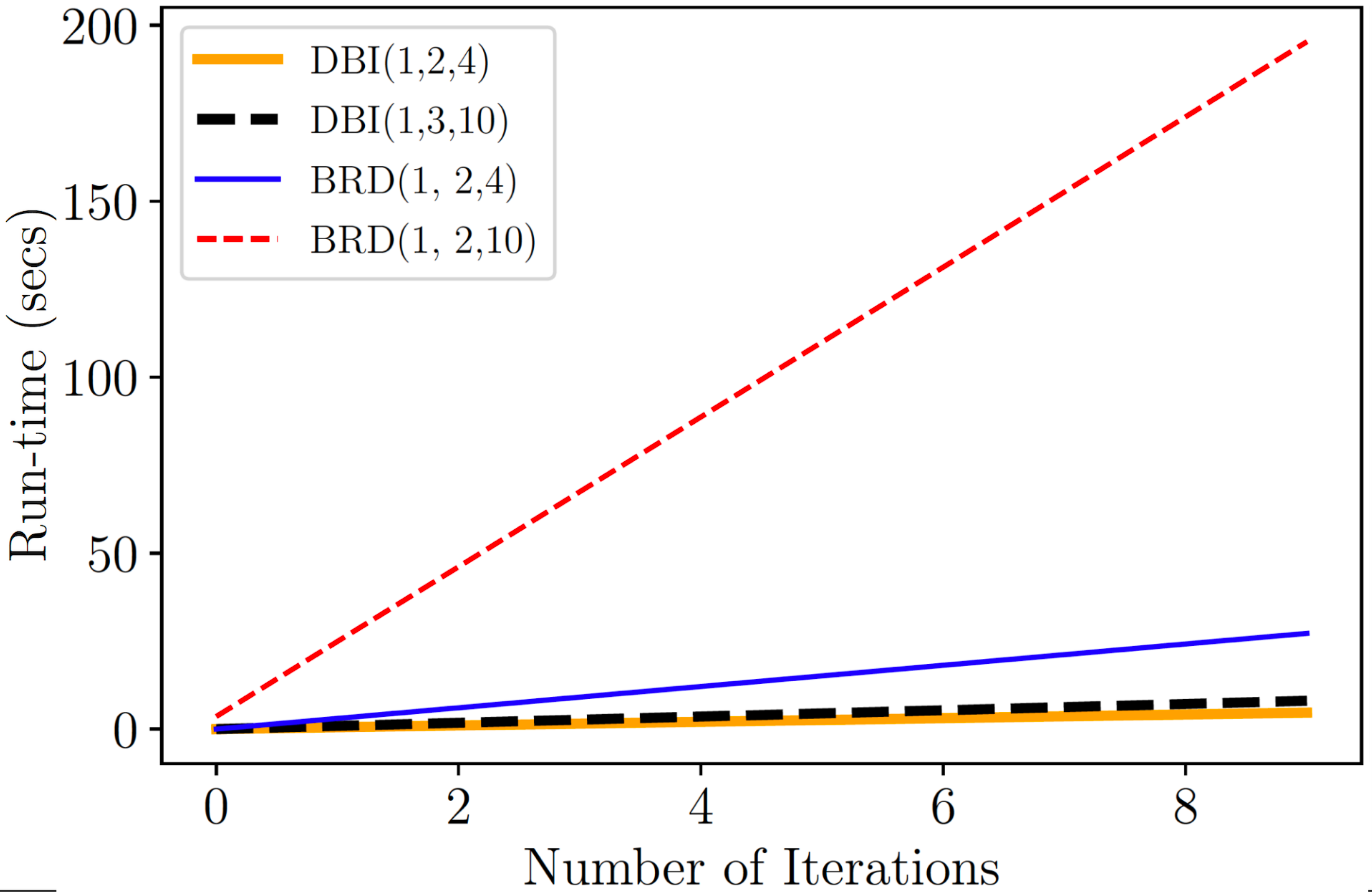}
\end{tabular}
\caption{Running time for the decentralized epidemic policy game.  Top and bottom columns correspond to result for games with 2 and 3 levels, respectively.}
\label{fig:eps2}
\end{figure}

As we can see in Figure~\ref{fig:eps1}, BRD can have poor convergence behavior in terms of global regret, whereas DBI appears to converge quite reliably to a path of play with a considerably lower global regret.
Notably, the improvement in solution quality becomes more substantial as we increase the game complexity either in terms of scale (number of leaves) or in terms of the level of hierarchy (moving from 2- to 3-level games).

Running time (in seconds) demonstrates the relative efficacy of DBI even further (see Figure~\ref{fig:eps2}).
In particular, observe the significant increase in the running time of BRD as we increase the number of leaves.
In contrast, DBI is far more scalable: indeed, even more than doubling the number of players appears to have little impact on its running time.
Moreover, BRD is several orders of magnitude slower than DBI for the more complex games.

\subsection{Hierarchical Public Goods Games}\label{sec:pgg}
\begin{figure}[ht!]
\centering
\includegraphics[width=0.9\columnwidth]{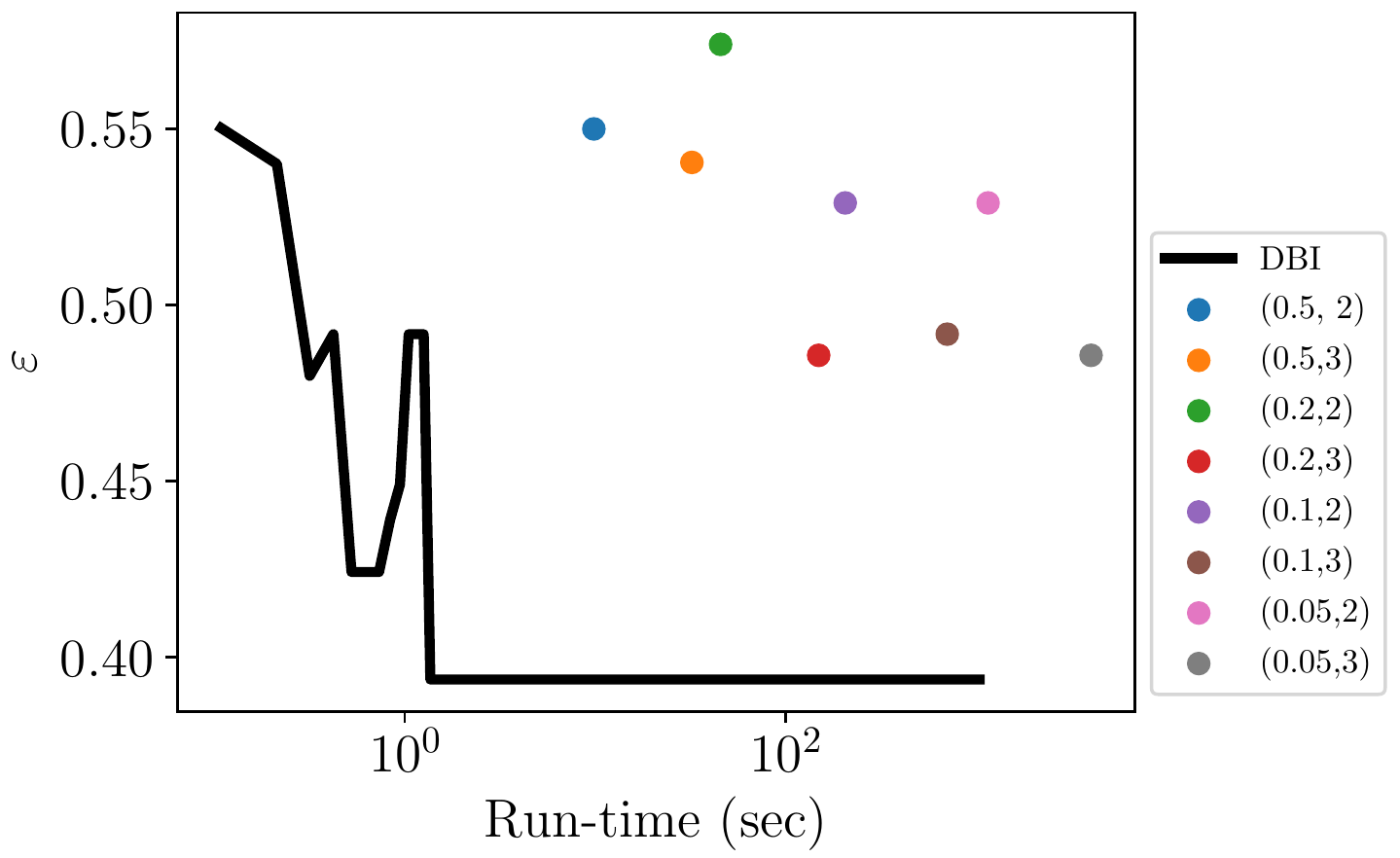}
\caption{Performance ($\epsilon$) in the Public Goods Game; the scatter points show the results of BRD with discretization factors $0.5, 0.2, 0.1, 0.05$, and best response rounds $2,3$.}\label{fig:pgg}
\end{figure}

Next, we consider \textit{hierarchical public goods games}.
A conventional networked public goods game endows each player $i$ with a utility function 
$u_i(x_i,x_{-i})=a_i + b_i x_i + \sum_{j} g_{ji} x_i x_j-c_i(x_i)$, where $g_{ji}$ is the impact of player $j$ on player $i$ (often represented as a weighted edge on a network), and $x_i \in [0,1]$ the level of investment in the public good by player $i$~\citep{bramoulle2007public}. 
We construct a 3-level hierarchical variant of such games by starting with the karate club network~\citep{zachary1977information} which represents friendships among 34 individuals.
Level-2 nodes are obtained by partitioning the network into two (sub)clubs, with leaves (level-3 nodes) representing all the individuals.
The utility of level-2 nodes is the sum of utilities of individual members of associated clubs, with the utility of the root being the sum of the utilities of all individuals.
Furthermore, we introduce non-compliance costs with investment policies in the level immediately above, as we did in the decentralized epidemic policy game (Section~\ref{sec:exp3}).
Further details on the exact form of the utility functions and parameters of the games are provided in \ifcr in the long version. \else Appendix \ref{app:pgg}.\fi

Figure \ref{fig:pgg} presents the global regret as a function of running time for DBI (black line) and BRD with different levels of discretization (dots). We observe that DBI yields considerably lower regret in these games than BRD even as we discretize the latter finely.
Moreover, DBI reaches smaller regret orders of magnitude faster than BRD.

\begin{figure*}[ht!]
\centering
\subfigure[Security game $\kappa = 0.1$.]{\includegraphics[width=1\columnwidth]{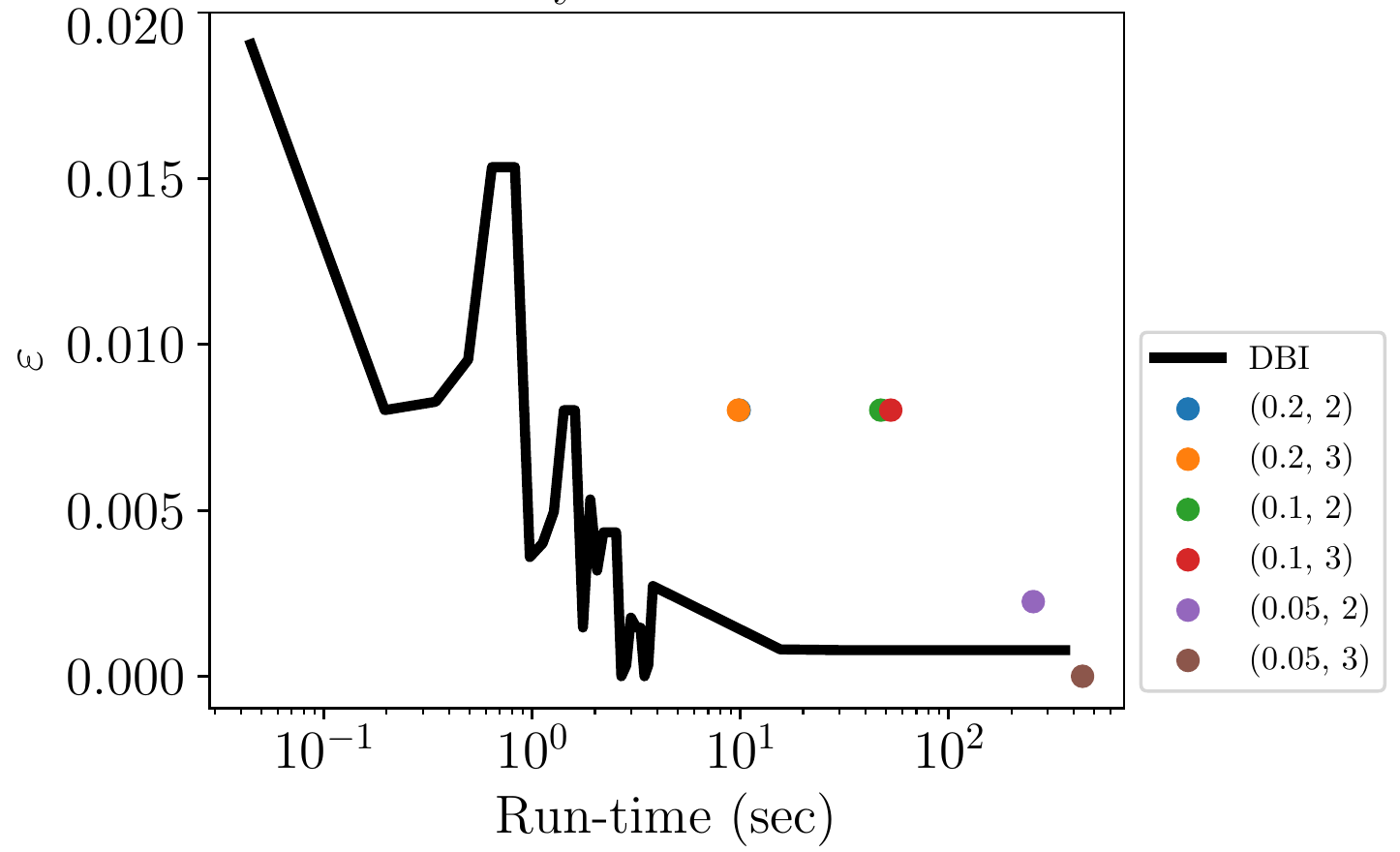}\label{fig:sec_games_p1}}~
\subfigure[Security game $\kappa = 0.5$.]{\includegraphics[width= 0.8\columnwidth]{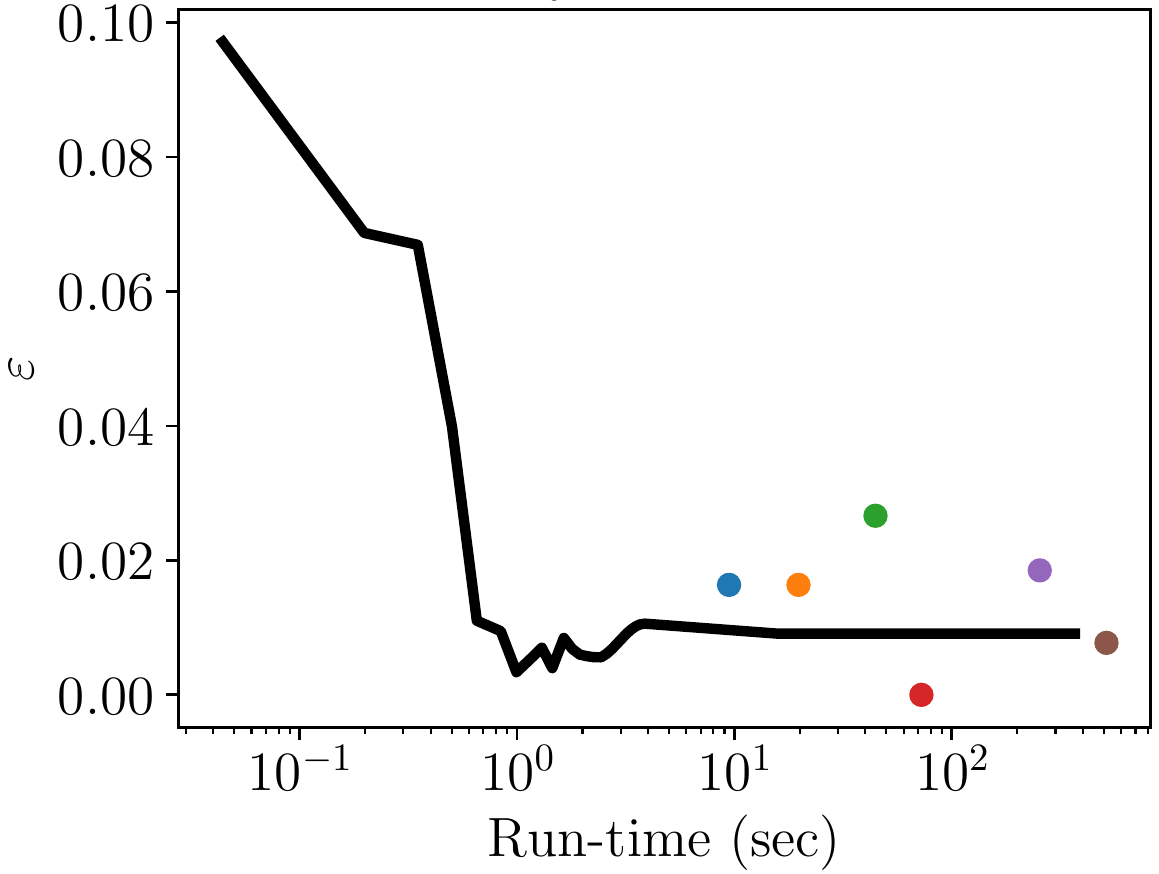}\label{fig:sec_games_p5}}
\caption{Results on $(1,3,6)$ hierarchical security games. (a) $\kappa=0.1$ and (b) $\kappa = 0.5$; legend is shared. }
\end{figure*}

\subsection{Hierarchical Security Games}\label{sec:secgame}
In the final set of experiments, we evaluate DBI on a hierarchical extension of \textit{interdependent security games}~\citep{bachrach2013contagion}.
In these games, $n$ defenders can each invest $x_i \ge 0$ in security.
If defender $i$ is attacked, the probability that the attack succeeds is $1/(1+x_i)$.
Furthermore, defenders are interdependent, so that a successful attack on defender $i$ cascades to defender $j$ with probability $q_{ji}$.
In the variant we adopt, the attacker strategy is a uniform distribution over defenders (e.g., the ``attacker'' is just nature, with attacks representing stochastic exogenous failures).
The utility of the defender is the probability of surviving the attack less the cost of security investment.

We extend this simultaneous-move game to a hierarchical structure consisting of one root player (e.g., government), three level-2 players (e.g., sectors), and six leaf players (e.g., organizations).
The policy-makers in the first two levels of the game recommend an investment policy to the level below, and aim to maximize total welfare (sum of utilities) among the leaf players in their subtrees.
Just as in both hierarchical epidemic and public goods games, whenever a player in level $l$ does not act according to the recommendation of their parent in level $l-1$, they incur a non-compliance cost.
Complete model details are deferred to \ifcr the long version. \else Appendix~\ref{app:sec}. \fi
We conduct experiments with two weights $\kappa$ that determine the relative importance of non-compliance costs in the decisions of non-root players in the game: $\kappa \in \{0.1, 0.5\}$.

Figures~\ref{fig:sec_games_p1} and \ref{fig:sec_games_p5} present the results of comparing DBI with BRD on this class of games, where BRD is again evaluated with different levels of action space discretization (note, moreover, that in this setting discretizing actions is not enough, since these are unbounded, and we also had to impose an upper bound).
We can observe that for either value of $\kappa$, DBI yields high-quality SPE approximation (in terms of global SPE regret) far more quickly than BRD.
In particular, when we use relatively coarse discretization, BRD is approximately an order of magnitude slower, and yields significantly higher regret.
In contrast, if we use finer discretization for BRD, global regret for BRD and DBI becomes comparable, but now BRD is several orders of magnitude slower.
For example, DBI converges within several seconds, whereas if we discretize $x_i$ into multiples of 0.02, BRD takes nearly 2 hours, while discretization at the level of 0.01 results in BRD taking nearly 7 hours.

\section{Conclusion}\label{sec:disc}
We introduced a novel class of hierarchical games, proposed a new game-theoretic solution concept and designed an algorithm to compute it. We assume a specific form of utility dependency between players and our solution concept only guarantees local stability. Improvement on each of these two fronts is an interesting direction for future work.

Given the generality of our framework, our approach can be used for many applications characterized by a hierarchy of strategic agents e.g., pandemic policy making. However, our modeling requires the full knowledge of the true utility functions of all players and our analysis assumes full rationality for all the players. Although the model we have addressed here is already challenging, these assumptions are unlikely to hold in many real-world applications. Therefore, further analysis is necessary to fully gauge the robustness of our approach before deployment.

\section*{Acknowledgments}
This work was supported in part by the US Army Research Office under MURI grant \# W911NF-18-1-0208.

\bibliography{ref}

\ifcr \else 
\newpage\clearpage
\onecolumn
\appendix
\section{Omitted Details from Section~\ref{sec:framework}}
\begin{proposition}
The SPE notion we defined for an SHG model corresponds exactly to the SPE of its extensive-form game (EFG) representation.
\end{proposition}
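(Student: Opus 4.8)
The plan is to make the two subgame notions coincide by constructing an explicit EFG representation of the SHG and then matching subgames on both sides. First I would build the EFG: the root is a single decision node, and to encode the simultaneous moves within a level I fix an arbitrary order on the players of each level $\mathscr{N}_l$ and let them move one after another in the tree, but assign each such player an information set that reveals the entire history $h_{<l}$ of \emph{completed} lower levels while hiding the contemporaneous choices of its peers in level $l$. Terminal payoffs are the SHG utilities $\u_i$. This EFG realizes exactly the informational structure stipulated in the model (players in level $l$ observe levels $1,\dots,l-1$ but not peers), and every root-to-leaf path corresponds to a joint action profile $\bx$.

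The crux is to characterize which nodes of this EFG initiate a valid EFG subgame. By the standard definition, a subgame must be rooted at a node lying in a \emph{singleton} information set and must be closed under information sets and under descendants. I would argue that the only nodes meeting this requirement are the \emph{level-boundary} nodes --- those reached after all players of levels $1,\dots,l-1$ have moved, one such node for each history $h_{<l}$. Indeed, at a level boundary the next mover (the first player of level $l$ in the fixed order) observes precisely $h_{<l}$, so distinct histories yield distinct, fully distinguished nodes, making the information set a singleton. Conversely, any \emph{intermediate} within-level node --- reached after some but not all peers of a level have moved --- lies in a non-singleton information set, since a later mover in the same level cannot distinguish it from the sibling nodes induced by its peers' alternative contemporaneous choices; such a node therefore cannot root a subgame. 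This establishes a bijection between EFG subgames and the pairs $(l, h_{<l})$, i.e., the level-$l$-subgames of the SHG.

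Finally I would verify that, under this bijection, the EFG subgame rooted at the level-boundary node for $h_{<l}$ has the same normal form as the SHG level-$l$-subgame given $h_{<l}$: both involve exactly the players in levels $\ge l$ with actions of levels $<l$ frozen at $h_{<l}$, the same action sets $\act_i$, the same within-level simultaneity and cross-level observation structure, and the same induced payoffs $\u_i$. Hence a strategy profile is a Nash equilibrium of one iff it is a Nash equilibrium of the other. Chaining the equivalences, $\bs$ is an SPE of the EFG (a Nash equilibrium of every EFG subgame) iff it is a Nash equilibrium of every level-$l$-subgame for every $l$ and $h_{<l}$, which is precisely our SHG definition of SPE. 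The main obstacle is the middle step: formally invoking the EFG-subgame axioms to rule out all intra-level intermediate nodes and to confirm the level-boundary nodes are singletons, since this is where the simultaneous-move encoding interacts delicately with the information-set-closure requirement; the remaining identification of normal forms is routine.
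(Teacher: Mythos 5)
Your proposal is correct and follows essentially the same route as the paper's own proof: sequentialize the within-level simultaneous moves under a fixed player order, give each mover an information set revealing $h_{<l}$ but hiding peers' contemporaneous choices, observe that only the level-boundary nodes (the first mover of each level at each history $h_{<l}$) lie in singleton information sets and hence can root EFG subgames, and conclude the bijection between EFG subgames and level-$l$-subgames, which yields the SPE equivalence. If anything, your write-up is more explicit than the paper's about the information-set-closure axiom and the final normal-form identification, but the underlying argument is identical.
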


\begin{proof}
First we show how to construct the EFG representation of an SHG.
Since within level $l$ players act simultaneously, we can designate a canonical player order for player in level $l$, say we label them as $(l, 1),(l, 2),\dotsc,(l, n_l)$.
Then in the EFG representation, we sequentialize the playing order as always be $(1, 1), (2, 1), (2, 2),\dotsc, (3, 1), \dotsc, (L, 1),\dotsc,(L, n_L)$.
For every $(l, 1),l=1,\dotsc,L$, its information sets are all in the form of some singleton $\{h_{<l}\}$ for some history $h_{<l}$.
While for $(l, n_l), l>1$, its information sets are sets that consist of some $h_{<l}$ concatenated with all possible moves of $(l, 1), \dotsc, (l, n_{l-1})$.
Then according the the classical definition of the subgame notion~\cite{selten1975reexamination,mas1995microeconomic}, the only well-defined subgames are those rooted at $(l, 1), \forall l$ since these are the only ones with singleton information set.
So every level-$l$ subgames corresponds to a subgame in the EFG representation, and every subgame in the EFG representation maps to a level-$l$ subgame of the EFG by investigating the player index.
Then the SPE for both representations are also equivalent.
\end{proof}\label{sec:app-spe}
\section{Omitted Details from Section~\ref{sec:analysis}}\label{sec:app-analysis}
\begin{proof}[Proof of Proposition~\ref{thm: lr}]
The learning dynamics of DBI can be written as $\bx^t=(\bm{I}+\alpha G)(\bx^{t-1})$.
Since $G$ has eigenvalues $\lambda_1,\dotsc,\lambda_d$ at $\bx^*$,
the matrix $\bm{I}+\alpha\nabla_{\bx}G$ at a stationary point $\bx^*$ has eigenvalues $1+\alpha\lambda_1, \dotsc, 1+\alpha\lambda_{d}$.
Since the set of LASPs is non-empty, $Re(\lambda_i)<0$ for all $\lambda_i$.
Then for the choice of $\alpha^*$ in the Proposition, the modulus of the Jacobian $$\rho(\bm{I}+\alpha^*\nabla_{\bx} G)=\max_{i\in[d]}\abs{1+\alpha^*\lambda_i}=\max_{i\in[d]}\sqrt{1+2\alpha^* Re(\lambda_i)+(\alpha^*)^2\abs{\lambda_i}^2}<1.$$
Proofs for the convergence rate of $O((1-\kappa/2)^t)$ can be found in previous work (e.g., \citet[Proposition~F.1]{fiez2020implicit} or \citet[Proposition~4]{wang2019solving})).
For completeness, we provide a proof.
Since $\rho((\bm{I}+\alpha \nabla_{\bm{x}}G)(\bm{x}^*))=1-\kappa$, according to \citet[Lemma 5.6.10]{horn2012matrix}, there exists a matrix norm $\norm{\cdot}$ such that $\norm{\bm{I}+\alpha \nabla_{\bm{x}}G}<1-\kappa+\epsilon$, for $\forall \epsilon>0$. We choose $\epsilon=\frac{\kappa}{4}$
The Taylor expansion of $(\bm{I}+\alpha G)$ at $\bm{x}^*$ is
\begin{align*}
    (\bm{I}+\alpha G)(\bm{x})=(\bm{I}+\alpha G)(\bm{x}^*)+(\bm{I}+\alpha\nabla_{\bm{x}}G)(\bm{x}^*)(\bm{x}-\bm{x}^*)+R(\bm{x}-\bm{x}^*),
\end{align*}
where
$R(\bm{x}-\bm{x}^*)=o(\norm{\bm{x}-\bm{x}^*})$. Let $R_1(\bm{x}-\bm{x}^*)=\frac{1}{\alpha}R(\bm{x}-\bm{x}^*)$,
Then we have $\lim_{\bm{x}\rightarrow\bm{x}^*}\frac{R_1(\bm{x}-\bm{x}^*)}{\norm{\bm{x}-\bm{x}^*}}=0$.
Then we can choose $\delta>0$ such that $\norm{R_1(\bm{x}-\bm{x}^*)}\le\tfrac{\kappa}{4}\norm{\bm{x}-\bm{x}^*}$ when $\norm{\bm{x}-\bm{x}^*}<\delta$.
\begin{align*}
    \norm{G(\bm{x})-G(\bm{x}^*)}&\le\norm{\nabla_{\bm{x}}G(\bm{x}^*)(\bm{x}-\bm{x}^*)}+\norm{R_1(\bm{x}-\bm{x}^*)}\\
    &\le\norm{\nabla_{\bm{x}}G(\bm{x}^*)}\norm{(\bm{x}-\bm{x}^*)}+\frac{\kappa}{4}\norm{\bm{x}-\bm{x}^*}\\
    &\le(1-\frac{\kappa}{2})\norm{\bm{x}-\bm{x}^*}.
\end{align*}
This shows the operator $(\bm{I}+\alpha G)$ is a contract mapping with contraction constant $(1-\tfrac{\kappa}{2})$.
Therefore the convergence rate is $O((1-\kappa/2)^t)$
\end{proof}

Before we present the proof of Proposition~\ref{thm: convergence}, we state the following lemma.
\begin{lemma}
The update gradient vector $G$ is L-Lipchitz if and only if $\norm{\nabla_{\bm{x}}G}\le L$ at all $\bm{x}\in\mathcal{X}$.
\end{lemma}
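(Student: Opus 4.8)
The plan is to prove this standard equivalence between a global Lipschitz bound on $G$ and a uniform operator-norm bound on its Jacobian $\nabla_{\bm{x}}G$ by treating the two implications separately. Throughout I will treat $G$ as (continuously) differentiable on $\X$, as is implicit in the surrounding analysis where $\nabla_{\bm{x}}G$ is already used, and I will assume $\X$ is convex so that line segments between points remain in the domain.

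First I would establish the reverse direction: if $\norm{\nabla_{\bm{x}}G}\le L$ everywhere, then $G$ is $L$-Lipschitz. For any $\bm{x},\bm{x}'\in\X$, parametrize the segment $\gamma(t)=\bm{x}'+t(\bm{x}-\bm{x}')$ for $t\in[0,1]$ and apply the fundamental theorem of calculus to write $G(\bm{x})-G(\bm{x}')=\int_0^1 \nabla_{\bm{x}}G(\gamma(t))\,(\bm{x}-\bm{x}')\,dt$. Bounding the integrand via the hypothesis gives $\norm{G(\bm{x})-G(\bm{x}')}\le\int_0^1\norm{\nabla_{\bm{x}}G(\gamma(t))}\,\norm{\bm{x}-\bm{x}'}\,dt\le L\norm{\bm{x}-\bm{x}'}$, which is exactly the Lipschitz condition stated in the footnote.

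Next I would handle the forward direction: if $G$ is $L$-Lipschitz, then $\norm{\nabla_{\bm{x}}G(\bm{x})}\le L$ at every $\bm{x}$. Fix $\bm{x}$ and an arbitrary unit vector $\bm{v}$, and invoke the definition of the directional derivative $\nabla_{\bm{x}}G(\bm{x})\,\bm{v}=\lim_{h\to 0}h^{-1}\big(G(\bm{x}+h\bm{v})-G(\bm{x})\big)$. Taking norms and applying the Lipschitz bound to the difference quotient yields $\norm{\nabla_{\bm{x}}G(\bm{x})\,\bm{v}}\le L\norm{\bm{v}}=L$. Taking the supremum over unit $\bm{v}$ then gives the operator-norm bound by the definition of the induced matrix norm.

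The argument is essentially routine real analysis, so the only point requiring care—rather than a genuine obstacle—is the geometry of the domain in the reverse direction: the FTC step requires the segment $\gamma$ to remain inside $\X$, which is why convexity (or at least path-connectedness with controlled arc length) must be assumed or stated. I would flag this assumption explicitly; in the convex-domain setting used throughout the convergence analysis it holds immediately.
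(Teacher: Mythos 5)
Your proof is correct and follows essentially the same route as the paper's: the ``if'' direction via the fundamental theorem of calculus along the segment with the operator-norm bound on the Jacobian, and the ``only if'' direction by bounding the derivative in an arbitrary direction through the limit of Lipschitz-controlled difference quotients (the paper phrases this limit via an integral representation of $G(\bm{x}_1+\epsilon\bm{x}_2)-G(\bm{x}_1)$, but the content is identical). Your explicit remark that the segment argument needs convexity of $\mathcal{X}$ is a reasonable precision that the paper leaves implicit.
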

\begin{proof}
First we prove for "if" direction.
Consider $\bm{x}_1, \bm{x}_2\in\mathcal{X}$,
\begin{align*}
    G(\bm{x}_2)-G(\bm{x}_1)&=\int_0^1 \nabla_{\bm{x}}G(x_1+\tau(\bm{x}_2-\bm{x}_1))(\bm{x}_2-\bm{x}_1)d\tau \\
    \Rightarrow \norm{G(\bm{x}_2)-G(\bm{x}_1)}&=\norm{\int_0^1 \nabla_{\bm{x}}G(x_1+\tau(\bm{x}_2-\bm{x}_1))(\bm{x}_2-\bm{x}_1)d\tau}\\
    &\le \norm{\int_0^1 \nabla_{\bm{x}}G(x_1+\tau(\bm{x}_2-\bm{x}_1))d\tau}\norm{\bm{x}_2-\bm{x}_1}\\
    &\le \int_0^1\norm{ \nabla_{\bm{x}}G(x_1+\tau(\bm{x}_2-\bm{x}_1))}d\tau\norm{\bm{x}_2-\bm{x}_1}\\
    &\le L\norm{\bm{x}_2-\bm{x}_1}.
\end{align*}
Then we prove for the "only if" direction.
Take $\epsilon>0$, then for $\forall \bm{x}_1, \bm{x}_2\in\mathcal{X}$,
\begin{align*}
    \norm{\int_0^\epsilon\left(\nabla_{\bm{x}}G(\bm{x}_1+\tau\bm{x}_2)\right)\cdot \bm{x}_2d\tau}&=\norm{G(\bm{x}_1+\epsilon\bm{x}_2)-G(\bm{x}_1)}\le \epsilon L\norm{\bm{x}_2}\\
    \Rightarrow \lim_{\epsilon\rightarrow0}\frac{\norm{\int_0^\epsilon\left(\nabla_{\bm{x}}G(\bm{x}_1+\tau\bm{x}_2)\right)\cdot \bm{x}_2d\tau}}{\epsilon\norm{\bm{x}_2}}&=\frac{\norm{\nabla_{\bm{x}}G(\bm{x}_1)\cdot\bm{x}_2}}{\norm{\bm{x}_2}}\le L.
\end{align*}
Since it holds for any $\bm{x}_2$, it must be $\norm{\nabla_{\bm{x}}G(\bm{x}_1)}\le L$.
And since it applies for any $\bm{x}_1\in\mathcal{X}$, this completes the proof.
\end{proof}

\begin{proof}[Proof of Proposition~\ref{thm: convergence}]
From the proof from Proposition $\ref{thm: lr}$ notice that in order to make $\norm{\bm{x}^t-\bm{x}^*}\le\epsilon$ we only have to let  $t\ge\lceil\frac{2}{\kappa}\log\norm{\bm{x}^0-\bm{x}^*}/\epsilon\rceil$ since now $\norm{\bm{x}^t-\bm{x}^*}\le(1-\frac{\kappa}{2})^t\norm{\bm{x}^0-\bm{x}^*}\le\exp\left({-\kappa/2t}\right)\le\epsilon$.
Now we need to characterize the region of initial points that can converge to $\bm{x}^*$ by characterizing the maximum possible radius of such initial point to $\bm{x}^*$.
Recall in the proof of Proposition~\ref{thm: lr}, this is captured by the. parameter $\delta$.
On one hand by using the Lipschitzness, we can bound the residual function by
\begin{align*}
    \norm{R_1(\bm{x}-\bm{x}^*)}\le\int^1_0\norm{\bm{I}+\alpha\nabla_{\bm{x}}G(\bm{x}^*+\tau(\bm{x}-\bm{x}^*))-(\bm{I}+\alpha\nabla_{\bm{x}}G(\bm{x}^*))}\norm{\bm{x}-\bm{x}^*}d\tau\le\frac{L}{2}\norm{\bm{x}-\bm{x}^*}^2.
\end{align*}
One the other hand to maintain this convergence rate we should let $\norm{R_1(\bm{x}-\bm{x}^*)}\le\tfrac{\kappa}{4}\norm{\bm{x}-\bm{x}^*}$.
Then we simply let $\le\tfrac{L}{2}\norm{\bm{x}-\bm{x}^*}^2\le\tfrac{\kappa}{4}\norm{\bm{x}-\bm{x}^*}$ we get an initial point should satisfy $\norm{\bm{x}-\bm{x}^*}\le\tfrac{\kappa}{2L}$.
\end{proof}

To prove Proposition~\ref{thm:measure}, we need a few more machinery.

\begin{proposition}
With $\alpha < 1/L$, $\bm{I}+\alpha G$ is a diffeomorphism.
\end{proposition}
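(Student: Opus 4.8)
The statement to prove is that $\bm{I}+\alpha G$ is a diffeomorphism whenever $\alpha < 1/L$, where $G$ is $L$-Lipschitz. The plan is to verify the two ingredients of being a diffeomorphism: that the map $\F := \bm{I}+\alpha G$ is a smooth bijection of $\mathcal{X}$ onto its image with smooth inverse. Smoothness of $\F$ is immediate from the standing differentiability assumptions on the utilities (so $G$, being assembled from gradients and Hessian inverses, is $C^1$), so the real content is \emph{invertibility}.

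First I would establish that $\F$ is injective and that its inverse is smooth via the local inverse function theorem applied globally. The key observation is to control the Jacobian $\nabla_{\bx}\F = \bm{I}+\alpha\nabla_{\bx}G$. By the lemma just proved in the appendix, $G$ being $L$-Lipschitz is equivalent to $\norm{\nabla_{\bx}G}\le L$ at every $\bx\in\mathcal{X}$. Hence $\norm{\alpha\nabla_{\bx}G}\le \alpha L<1$. A standard Neumann-series argument then shows that $\bm{I}+\alpha\nabla_{\bx}G$ is nonsingular at every point: if $\norm{M}<1$ then $\bm{I}+M$ is invertible with inverse $\sum_{k\ge0}(-M)^k$. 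This makes $\F$ a local diffeomorphism everywhere by the inverse function theorem.

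Next I would upgrade local invertibility to global injectivity, which is where a little care is needed. The clean route is to show $\F$ is a contraction perturbation of the identity: for any $\bx,\bx'$,
\[
\norm{\F(\bx)-\F(\bx')-(\bx-\bx')}=\alpha\norm{G(\bx)-G(\bx')}\le \alpha L\norm{\bx-\bx'},
\]
with $\alpha L<1$. This immediately gives the lower bound
\[
\norm{\F(\bx)-\F(\bx')}\ge (1-\alpha L)\norm{\bx-\bx'},
\]
so $\F$ is injective (indeed a bi-Lipschitz homeomorphism onto its image). Combined with the everywhere-nonsingular Jacobian, $\F^{-1}$ is smooth on the image, so $\F$ is a diffeomorphism onto $\F(\mathcal{X})$.

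The main obstacle I anticipate is being precise about the domain and codomain: the cleanest statement requires $\mathcal{X}=\reals^d$ (or a convex open set) so that the difference quotient/fundamental-theorem-of-calculus step in the injectivity bound is valid and so that surjectivity onto $\reals^d$ can be claimed via the contraction-mapping theorem (solving $\F(\bx)=\by$ by iterating $\bx\mapsto \by-\alpha G(\bx)$, a contraction with constant $\alpha L<1$). If the intended claim is merely a diffeomorphism onto its image, the injectivity-plus-nonsingular-Jacobian argument suffices and no surjectivity is needed. I would state the convex-domain assumption explicitly and lean on the already-proved Lipschitz-equivalence lemma to bound $\norm{\nabla_{\bx}G}$, keeping the rest to the two short estimates above.
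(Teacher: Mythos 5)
Your proof is correct and follows essentially the same two-step strategy as the paper's: injectivity from the Lipschitz bound $\alpha L<1$ (your bi-Lipschitz lower bound $\norm{\F(\bx)-\F(\bx')}\ge(1-\alpha L)\norm{\bx-\bx'}$ is just a quantitative sharpening of the paper's proof by contradiction), plus everywhere-nonsingular Jacobian via $\norm{\alpha\nabla_{\bx}G}\le\alpha L<1$ (your Neumann-series argument and the paper's spectral-radius/eigenvalue argument are interchangeable) combined with the inverse function theorem to get a smooth inverse. Your additional care about the domain $\mathcal{X}$ and surjectivity onto the image is a reasonable refinement of a point the paper leaves implicit, but it does not change the underlying argument.
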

\begin{proof}
First we show $(\bm{I}+\alpha G)$ is invertible.
Suppose $\bm{x}_1,\bm{x}_2\in\mathcal{X}$ such that $\bm{x}_1\not=\bm{x}_2$ and $(\bm{I}+\alpha G)(\bm{x}_1)=(\bm{I}+\alpha G)(\bm{x}_2)$.
Then $\bm{x}_1-\bm{x}_2=\alpha (G(\bm{x}_2)-G(\bm{x}_1))$.
And by $\alpha<1/L$ we have $\norm{\bm{x}_1-\bm{x}_2}\le \alpha L\norm{\bm{x}_1-\bm{x}_2}<\norm{\bm{x}_1-\bm{x}_2}$,
which is a contraction.

Next we show we show its invert function is well-defined on any point of $\mathcal{X}$.
Notice that $\rho(\alpha \nabla_{\bm{x}}G)\le\norm{\alpha \nabla_{\bm{x}}G}\le\alpha L < 1$.
And notice that since the eigenvalues of $\bm{I}+\alpha \nabla_{\bm{x}}G$ is the eigenvalues of $\alpha \nabla_{\bm{x}}G$ plus 1, the only way that makes $\det(\bm{I}+\alpha \nabla_{\bm{x}}G)=0$ is to have one of the eigenvalues of  $\alpha \nabla_{\bm{x}}$ to be $-1$. This contradicts to $\rho(\alpha \nabla_{\bm{x}}G) < 1$.
Therefore by the implicit function theorem, $(\bm{I}+\alpha G)$ is a local diffeomorphism on any point of $\mathcal{X}$, and therefore $(\bm{I}+\alpha G)^{-1}$ is well defined on $\mathcal{X}$.
\end{proof}

\begin{theorem}[Center and Stable Manifold {\cite[Theorem III.7, Chapter 5]{shub1987global}}]
Suppose $\bm{x}^*=h(\bm{x}^*)$ is a critical point for the $C^r$ local diffeomorphism $h: \mathcal{X}\rightarrow\mathcal{X}$.
Let $\mathcal{X}=\mathcal{X}_s\oplus \mathcal{X}_u$, where $\mathcal{X}_s$ is the stable center eigenspace belonging to those eigenvalues of $\nabla_{\bm{x}}h(\bm{x}^*)$ whose modulus is no greater than 1, and $\mathcal{X}_s$ is the unstable eigenspace belonging to those whose modulus is greater than 1.
Then there exists a $C^r$ embeded disk $W^{cs}_{loc}(\bm{x}^*)$ that is tangent to $\mathcal{X}_s$ at $\bm{x}^*$ called the local stable center manifold.
Moreover there $\exists \epsilon>0$, $h(W^{cs}_{loc}(\bm{x}^*))\bigcap \mathbb{B}_\epsilon(\bm{x}^*)\subset W^{cs}_{loc}$ and $\bigcap_{t=0}^{\infty}h^{-t}(\mathbb{B}_\epsilon(\bm{x}^*))\subset W^{cs}_{loc}(\bm{x}^*).$\label{thm:manifold}
\end{theorem}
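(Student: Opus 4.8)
The plan is to prove this by the Lyapunov--Perron (integral-equation) method, realizing the center-stable manifold as the graph of a Lipschitz -- and then $C^r$ -- map $g:\mathcal{X}_s\to\mathcal{X}_u$ whose graph collects exactly the points admitting a bounded forward orbit of a suitably modified map.

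First I would translate the fixed point to the origin and split coordinates. Writing $\bm{x}=\bm{x}^*+\bm{z}$ and $A=\nabla_{\bm{x}}h(\bm{x}^*)$, the map becomes $\bm{z}\mapsto A\bm{z}+N(\bm{z})$ with $N(0)=0$ and $\nabla N(0)=0$. Using the invariant splitting $\mathcal{X}=\mathcal{X}_s\oplus\mathcal{X}_u$ I block-diagonalize $A=\mathrm{diag}(A_s,A_u)$; by hypothesis the spectral radius of $A_s$ is $\le 1$ while every eigenvalue of $A_u$ has modulus $>1$, so $A_u$ is invertible with $\rho(A_u^{-1})<1$. Passing to an adapted (Lyapunov) norm I arrange $\norm{A_s}\le\sigma_s$ and $\norm{A_u^{-1}}\le\sigma_u<1$ with a spectral gap $\sigma_s\sigma_u<1$; this is achievable since $\rho(A_s)\rho(A_u^{-1})\le\rho(A_u^{-1})<1$, so $\sigma_s,\sigma_u$ can be taken close to these radii. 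Because only a local statement is needed, I multiply $N$ by a smooth bump supported in $\mathbb{B}_\epsilon$ to obtain $\tilde N$ that agrees with $N$ near $0$ but has globally small Lipschitz constant $\delta$; here $\delta$ can be made as small as desired by shrinking $\epsilon$, and the modified map $\tilde h=A+\tilde N$ coincides with $h$ on $\mathbb{B}_\epsilon$.

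Next I set up the fixed-point operator. A forward orbit $(\bm{z}_t)_{t\ge 0}$ of $\tilde h$, written $\bm{z}_t=(\xi_t,\eta_t)$, satisfies $\xi_{t+1}=A_s\xi_t+\tilde N_s(\bm{z}_t)$ and $\eta_{t+1}=A_u\eta_t+\tilde N_u(\bm{z}_t)$. Boundedness in the expanding direction forces the unstable component to be summed backward, $\eta_t=-\sum_{k\ge 0}A_u^{-(k+1)}\tilde N_u(\bm{z}_{t+k})$, while the stable-center component propagates forward from a free initial value $\xi_0\in\mathcal{X}_s$. Collecting both into a single operator $\mathcal{T}_{\xi_0}$ on the Banach space $\ell^\infty(\mathbb{N};\mathcal{X})$ of bounded sequences, the gap condition $\sigma_s\sigma_u<1$ together with smallness of $\delta$ makes $\mathcal{T}_{\xi_0}$ a contraction; its unique fixed point is the bounded orbit through $\xi_0$. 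Setting $g(\xi_0):=\eta_0$ defines the graph, and the uniform Lipschitz dependence of the fixed point on $\xi_0$ yields $\mathrm{Lip}(g)\le 1$. Since $\nabla N(0)=0$, one gets $g(0)=0$ and $\nabla g(0)=0$, i.e.\ tangency of $W^{cs}_{loc}=\{(\xi,g(\xi))\}$ to $\mathcal{X}_s$ at $\bm{x}^*$. The two invariance inclusions then follow formally: forward $\tilde h$-orbits of graph points stay on the graph, giving $h(W^{cs}_{loc})\cap\mathbb{B}_\epsilon\subset W^{cs}_{loc}$; and any point whose entire $h$-orbit remains in $\mathbb{B}_\epsilon$ is a bounded orbit of $\tilde h$ (since the two maps agree there), hence lies on the graph, giving $\bigcap_{t\ge 0}h^{-t}(\mathbb{B}_\epsilon)\subset W^{cs}_{loc}$.

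I expect the main obstacle to be upgrading $g$ from Lipschitz to $C^r$, which is genuinely delicate in the center-stable case because there is no contraction in the center directions. The standard remedy is the fiber-contraction theorem: lift the graph transform to the bundle of candidate derivatives, show the lifted map contracts on the fibers (using again the smallness of $\delta$ and the gap, now in the form $\sigma_s^{\,j}\sigma_u<1$ for $1\le j\le r$, which holds for each fixed finite order because $\rho(A_s)^{j}\rho(A_u^{-1})<1$), and conclude continuity of the fixed section, i.e.\ $C^1$ regularity; one then bootstraps to $C^r$ by induction on the derivative order. Verifying these gap inequalities at each order and confirming the candidate-derivative bundle is preserved is the technical heart of the argument, whereas everything preceding it is a routine application of the contraction mapping principle.
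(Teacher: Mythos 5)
You should first note what the benchmark is here: the paper does not prove this statement at all. It is quoted verbatim, with citation, from Shub's \emph{Global Stability of Dynamical Systems} (Theorem III.7, Chapter 5) and used as a black box in the proof of Proposition~\ref{thm:measure}; the textbook proof behind the citation runs through the Hadamard graph transform and the fiber contraction theorem. Your Lyapunov--Perron route is a legitimate alternative to that, and most of your outline (translation to the origin, adapted norms with $\sigma_s\sigma_u<1$, bump-function truncation of the nonlinearity, graph construction, tangency from $\nabla N(0)=0$, fiber contraction for $C^r$ with the order-$j$ gap conditions $\sigma_s^j\sigma_u<1$, which indeed hold at every finite order since $\rho(A_s)\le 1$ and $\rho(A_u^{-1})<1$) is sound.

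There is, however, one concrete gap, and it sits exactly at the feature that distinguishes this theorem from the hyperbolic stable manifold theorem: your contraction space. On $\ell^\infty(\mathbb{N};\mathcal{X})$ of \emph{bounded} sequences, the Lyapunov--Perron operator is in general not even a self-map once $\mathcal{X}_s$ contains genuine center directions. With $\rho(A_s)=1$ you can only arrange $\norm{A_s}\le\sigma_s$ for some $\sigma_s>1$ when unit-modulus Jordan blocks are present, and even in the benign case $A_s=\bm{I}$ the forward convolution term $\sum_{k=0}^{t-1}A_s^{t-1-k}\tilde{N}_s(\bm{z}_k)$ has $t$ summands each of size $\Theta(\delta\norm{\bm{z}}_\infty)$, so it grows linearly in $t$ and leaves $\ell^\infty$. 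Relatedly, bounded forward orbits do not characterize the center-stable set: polynomially growing center orbits of the linearization belong to $\mathcal{X}_s$ but not to $\ell^\infty$, so your graph would omit them. The standard repair is to work in the weighted space with norm $\norm{\bm{z}}_\gamma=\sup_{t\ge 0}\gamma^{-t}\norm{\bm{z}_t}$ for any fixed $\gamma$ with $\sigma_s<\gamma<\sigma_u^{-1}$ (such $\gamma$ exists because $\sigma_s\sigma_u<1$, and $\gamma>1$ so bounded orbits still embed); there the forward sum is controlled by $\delta/(\gamma-\sigma_s)$ and the backward sum by $\delta\sigma_u/(1-\gamma\sigma_u)$, the operator contracts for small $\delta$, and your final inclusion $\bigcap_{t\ge 0}h^{-t}(\mathbb{B}_\epsilon(\bm{x}^*))\subset W^{cs}_{loc}(\bm{x}^*)$ goes through unchanged since an orbit confined to $\mathbb{B}_\epsilon(\bm{x}^*)$ lies in the weighted space and uniqueness of the fixed point over its $\xi_0$ places it on the graph. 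With that substitution your argument is essentially complete (modulo the fiber-contraction bookkeeping you already flagged); one further point worth stating explicitly is that $W^{cs}_{loc}$ depends on the choice of cut-off and is non-unique, which is consistent with, and explains the purely local and one-directional form of, the invariance statements in the theorem.
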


\begin{proof}[Proof of Proposition~\ref{thm:measure}]
For $\forall \bm{x}^*\in\mathcal{X}^*_{sad}$, let $\epsilon(\bm{x}^*)>0$ be the radius of neighborhood provided by Theorem \ref{thm:manifold} for diffeomorphism $h=\bm{I}+\alpha G$ and point $\bm{x}^*$.
Then define $\mathscr{B}=\bigcup_{\bm{x}^*\in\mathcal{X}^*_{sad}}\mathbb{B}_{\epsilon(\bm{x}^*)}(\bm{x}^*)$.
And since $\mathcal{X}$ is a subset of Euclidean space so it is second-countable. By Lindelöf's Lemma \cite{kelley1955general}, which stated that every open cover there is a countable subcover, we can actually write
$\mathscr{B}=\bigcup_{i=1}^\infty \mathbb{B}_{\epsilon(\bm{x}^*_i)}(\bm{x}^*_i)$ for a countable family of saddle points $\{\bm{x}^*_i\}_{i=1}^{\infty}\subseteq\mathcal{X}^*_{sad}$.
Therefore for $\forall \bm{x}^0\in\mathcal{X}^0_{sad}$ that converges to a saddle point, it must converge to $\bm{x}^*_i$ for some $i$.
And $\exists T(\bm{x}^0)>0$, such that $\forall t>T(\bm{x}^0), h^t(\bm{x}^0)\in\mathbb{B}_{\epsilon(\bm{x}^*_i)}(\bm{x}^*_i)$.
From Theorem~\ref{thm:manifold} we have $h^t(\bm{x}^0)\in W^{sc}_{loc}(\bm{x}^*_i)$.
Since $h$ is a diffeomorphism on $\mathcal{X}$ we have $\bm{x}^0\in h^{-t}(W^{sc}_{loc}(\bm{x}_i^*))$.
We furthermore union over all finite time step $\bm{x}^0\in\bigcup_{t=0}^\infty h^{-t}(W^{sc}_{loc}(\bm{x}_i^*))$.
Then we have $\mathcal{X}^0_{sad}\subseteq \bigcup_{i=1}^\infty\bigcup_{t=0}^\infty h^{-t}(W^{sc}_{loc}(\bm{x}_i^*))$.
For each $i$ since $\bm{x}^*_i$ is a saddle point, it has an eigenvalue greater than 1, so the dimension of unstable eigenspace  $dim(\mathcal{X}_u(\bm{x}^*_i))\ge1$.
Therefore the dimension of $W^{sc}_{loc}(\bm{x}_i^*)$ is less than full dimension.
This leads to $\mu(W^{sc}_{loc}(\bm{x}_i^*))=0$ for any $i$.
Again since $h$ is a diffeomorphism, $h^{-1}$ is locally Lipschitz which are null set preserving.
Then $\mu(h^{-t}(W^{sc}_{loc}(\bm{x}_i^*)))=0$ for $\forall i, t$.
And since the countable union of measure zero sets is measure zero, $\mu(\bigcup_{i=1}^\infty\bigcup_{t=0}^\infty h^{-t}(W^{sc}_{loc}(\bm{x}_i^*)))=0$.
So $\mu(\mathcal{X}^0_{sad})=0$.
\end{proof}

\section{Omitted Details from Section~\ref{sec:exp}}\label{sec:app-exp}
\subsection{Derivation of Total Second-Order Derivatives}\label{sec:second-order}
In our experiments we may need to check the second-order total derivative at a point.
This section provides our approach to calculate such quantity.

For player of level $L$, $D^2_{\bm{x}_{l,i}, \bm{x}_{l,i}}u_{l,i}=\nabla^2_{\bm{x}_{l,i}, \bm{x}_{l,i}}u_{l,i}$.

For level $l<L$, notice that $D_{\bm{x}_{l,i}}u_{l,i}$
is an explicit function of the action variables of players that are descendants of $(l, i)$.
Denote this set of players as $\mathit{DES}(l,i)$ (excluding $(l,i)$), then
we have $D_{\bm{x}_{l,i}, \bm{x}_{l,i}}^2u_{l,i}=\nabla_{\bm{x}_{l,i}}(D_{\bm{x}_{l,i}}u_{l,i})+\sum_{(\lambda, \eta)\in \mathit{DES}(l,i)}\nabla_{\bm{x}_{\lambda, \eta}}(D_{\bm{x}_{l,i}}u_{l,i})D_{\bm{x}_{l,i}}\bm{x}_{\lambda, \eta}$.

Here $D_{\bm{x}_{l,i}}\bm{x}_{\lambda, \eta}=(-1)^{\lambda-l}\prod\limits_{\substack{(\lambda^\prime, \eta^\prime)\in \\ \mathit{PATH}((\lambda,\eta)\\ \rightarrow(l,i))}}\left(\nabla^2_{\bm{x}_{\lambda^\prime, \eta^\prime},\bm{x}_{\lambda^\prime, \eta^\prime}}u_{\lambda^\prime,\eta^\prime}\right)^{-1}\nabla^2_{\bm{x}_{\lambda^\prime, \eta^\prime},\bm{x}_{\mathit{PA}(\lambda^\prime, \eta^\prime})}u_{\lambda^\prime,\eta^\prime}$.

The terms $\nabla_{\bm{x}_{l,i}}(D_{\bm{x}_{l,i}}u_{l,i})$ and $\nabla_{\bm{x}_{\lambda, \eta}}(D_{\bm{x}_{l,i}}u_{l,i})$ involve thrice derivatives.
In practice we use either numerical finite-difference methods to approximate this value or use symbolic algebra feature in software tools such as Mathematica~\cite{wolfram1999mathematica} to calculate the analytical form of this derivative.

\subsection{Details of Polynomial Games Experiments}\label{sec:polygames}
\paragraph{Setup}
We compare the following algorithms, such that the gradient update vectors are as follows

DBI: $G^{\mathit{DBI}}=
(D_{\bm{x}_1}u_1,\dotsc,D_{\bm{x}_n}u_n)$.
SIM: $G^{\mathit{SIM}}=
(\nabla_{\bm{x}_1}u_1,\dotsc,\nabla_{\bm{x}_n}u_n)
$. SYM: $G^{\mathit{SYM}}=
G^{\mathit{SIM}}+G^{\mathit{SIM}}A^{\mathit{SIM}}
$,
where $A^{\mathit{SIM}}=(J^{\mathit{SIM}}-(J^{\mathit{SIM}})^T)/2$, and $J^{\mathit{SIM}}$
is the Jacobian matrix of $G^{\mathit{SIM}}$.
HAM: $G^{\mathit{HAM}}=-\nabla_{\bm{x}}\norm{G^{\mathit{SIM}}}^2$.
CO: $G^{\mathit{CO}}=G^{\mathit{SIM}}+\gamma G^{\mathit{HAM}}$, where we let $\gamma=0.1$ for all experiments.
SYM\_ALN: $G^{\mathit{SYM}}=
G^{\mathit{SIM}}+\zeta G^{\mathit{SIM}}A^{\mathit{SIM}}$,
where $\zeta$ is the sign of
$\frac{1}{2d}(G^{\mathit{HAM}}(G^{\mathit{SIM}})^T)(G^{\mathit{HAM}}(G^{\mathit{SIM}}A^{\mathit{SIM}})^T)+0.1$.

\paragraph{Experimental Details for (1, 1, 1), 1-d Action Games}
We use 
$u_1(x, z)= -7x^2 + 9xz + x - z$,
$u_2(x, y, z)= -2y^2  -4yz - 10x^2 + 2xz - 3z^2 + 4y + 7x - 8z - 8xyz$, and, 
$u_3(y, z)= -10z^2 - 9yz + 9y^2 -5z - 2y$.
Here $(x, y, z)$ are action variables for players 1, 2, 3.
We use the learning rate of $\alpha=1e-5$ for all algorithms. We verify that DBI converges to an LSPE $(x, y, z)=(-0.34, 1.85, -1.08)$.

\paragraph{Experimental Details for (1, 1, 2), 1-d Action Games}
We use
$u_1(x, y, z)=-2x^2 - 3xy + y^2 + 5x + 7y + 3xz - 10yz  + 5xyz - 6z$,
$u_2(w, x, y, z)=2w^2 - wx - 3wy - 5x^2 + 9xy + 2y^2 + 3w + 5x - 4y + 5z^2 + 8wz + 7xz - 9yz - 10z$,
$u_3(w, y, z)=-5y^2 - 8yz + z^2 + 8y - 9z - 2wy - 4wz - w^2 - 8wyz - 2w$, and, $u_4(w, y, z)=-10z^2 - 2yz +5y^2 - 7z - 6y - 3wz - 8wy - 10wyz + 5w$
Here $(x, w, y, z)$ are action variables for players 1, 2, 3, 4. We use $\alpha=4e-6$ for all algorithms.
We verify that DBI converges to an LSPE $(x, w, y, z)=(4.70, -2.13, 10.27, 9.93)$.

\paragraph{Experimental Details for (1, 1, 1), 3-d Action Games}
We use $u_1(\bm{x}, \bm{z})= -7\sum_{i=1}^3x_i^2 + 9(\sum_{i=1}^3x_i)(\sum_{i=1}^3z_i)  + \sum_{i=1}^3x_i - \sum_{i=1}^3z_i$,
$u_2(\bm{x}, \bm{y}, \bm{z})= -2\sum_{i=1}^3y_i^2  -4 (\sum_{i=1}^3y_i)(\sum_{i=1}^3z_i) -10
(\sum_{i=1}^3x_i)(\sum_{i=1}^3z_i) + 2\sum_{i=1}^3x_i^2-3\sum_{i=1}^3z_i^2+4(\sum_{i=1}^3x_i)(\sum_{i=1}^3y_i)(\sum_{i=1}^3z_i)+7(\sum_{i=1}^3x_i) - 8(\sum_{i=1}^3y_i) -8 (\sum_{i=1}^3z_i)$, and, 
$u_3(\bm{y}, \bm{z})=-10\sum_{i=1}^3z_i^2 -9(\sum_{i=1}^3y_i)(\sum_{i=1}^3z_i)+9\sum_{i=1}^3y_i^2-5\sum_{i=1}^3y_i-2\sum_{i=1}^3z_i$,
where $\bm{x}, \bm{y}, \bm{z}$ are action variables of players 1, 2, 3. We use the learning rate $\alpha=1e-5$ for all algorithms. We verify that DBI converges to an LSPE $(\bm{x}, \bm{y}, \bm{z})=((-0.39, -0.39, -0.39), (0.29, 0.29, 0.29), (-0.58, -0.58, -0.58))$.

The time cost of each algorithm is measured in Table~(\ref{table:time}).

\begin{table}[ht!]
\centering
\begin{tabular}{|c|c|c|c|c|c|c|}\hline
& DBI	& SIM	& SIM\_ALN	& SYM	& CO	& HAM\\\hline
(1,1,1)-1d	& 1.65	& 1.26	& 5.92	& 1.61	& 4.91 &	4.58 \\\hline
(1,1,2)-1d	& 2.93	& 2.07	& 11.9	& 3.57	& 14.9	& 13.3 \\\hline
(1,1,1)-3d	& 12.4	& 10.9	& 60.1	& 9.63 & 	49.4	& 53.9 \\\hline
\end{tabular}
\caption{Comparison of the running times of DBI and baselines from experiment 1.}\label{table:time}
\end{table}

\paragraph{A note on the physical significance of the above utility functions} Although our methodology is applicable in principle to any set of agents' utility function satisfying the definition of SHGs, it is reasonable to ask whether there exists a game model, grounded in some real-world domain, that exhibits the SHG hierarchy as well as the polynomial multi-variable utility structure above. To that end, consider the \textbf{COVID-19 Game} delineated below. In the experiments described below, each agent (a policy-maker in a hierarchy such as the federal government, state governments, and county governments) has a non-dimensional action restricted to closed interval $[0,1]$, interpreted as a social distancing factor due to policy intervention (either recommended or implemented). However, in general, a policy intervention (hence, and agent's action) may take the form of a multi-dimensional real vector, e.g. factor for social distancing, factor for vaccination roll-out, etc.; hence, in general, the overall cost (negative payoff) of each agent is a function of $2+n_L$ vectors rather than $2+n_L$ scalars (correspoding to the agent, its parent, and all leaves) as in Equation~\eqref{eq:covid19overall_cost}. Although the actual functional form is not necessarily polynomial, we can always take a polynomial approximation in our model via curve-fitting or a Taylor series expansion. The above polynomial functions would then belong to the resulting class of functions; in our experiments for Convergence Analysis experiments are randomly chosen since the focus of this paper is not on modeling any real-world multi-agent interaction scenario to a high degree of accuracy but on assessing the performance of our algorithm on reasonable models consistent with the SHG structure.
This line of thinking can be also applied to the utility functions used in the experiments on the relationship between LASP and LSPE described at the end of the appendix.

\paragraph{Additional Experiments on the Relationship Between LASP and LSPE}\label{sec:exp2}
In this additional experiment, We will study the chance of having a convergent algorithm as well as the probability of finding an equilibrium upon convergence on several classes of SHGs.
We design game classes of SHGs in a way that every game class $\mathcal{F}$ has the same parameter space on both its topology and payoff structure. For each class, we generate $N$ instances, and for each instance, we first find the set of critical points by numerically solving the first order condition $D_{\bx_{\li}}\u_{\li}=0, \forall (\li)$. For each of these critical points, we determine whether it is an LASP of DBI by checking whether the Jacobian of updating gradient $G$ have eigenvalues all of which have negative real part. If this is the case, we classify the critical point as an LASP according to Proposition~\ref{prop:suff}. Finally, for each of the LASPs, we further check whether it is an LSPE by checking whether $D_{\bx_{\li}, \bx_{\li}}^2\u_{\li}\prec 0,\forall(\li)$.
Finally, across the $N$ instances we compute  the fraction of games for which an LASP was found (\% LASP), and for those with an LASP, the fractions of the instances where an LASP is also an LSPE (\% LSPE). We call these two numbers the measure properties of a game class.

We denote by our game classes by $\mathcal{F}_{\text{sub}}^{\text{super}}$ where the subscript is a vector listing the number of players in each SHG level from top to bottom, and the superscript indicates the parameters of the game. For example $\mathcal{F}^C_{1,1}$ is an SHG class with two levels and one player in each level (a Stackelberg game) with $C \in \mathbb{R}^+$ determining payoffs as follows:
$\u_i=\sum_{\alpha+\beta\le4, \alpha, \beta\in\mathbb{N}}c_{i,\alpha,\beta}x^{\alpha}y^{\beta}$, where $x,y\in\mathbb{R}$ are 1-d action variables for the  player ($i=1$) and the second level-$2$ player ($i = 2$), respectively;
the player-specific coefficients $c_{i,\alpha,\beta}$ are integers generated uniformly in $[-C, C]$ for different values of $C$. When $C=\infty$, we generate the coefficients from a continuous uniform distribution in $[-1, 1]$. See Appendix~\ref{sec:app-exp} for more information (e.g., the utility structure) of the other game classes. 

\begin{table*}[ht!]
\centering
%\small{
\begin{tabular}{|p{40pt}|*{6}{p{16pt}|}*{7}{p{19pt}|}}\hline
$\mathcal{F}$ & $\mathcal{F}^{1}_{1,1}$ & $\mathcal{F}^{10}_{1,1}$ & $\mathcal{F}^{\infty}_{1,1}$ & $\mathcal{F}^{1}_{1,2}$& $\mathcal{F}^{10}_{1,2}$  & $\mathcal{F}^{\infty}_{1,2}$& $\mathcal{F}^{1}_{1,1,1}$ & $\mathcal{F}^{10}_{1,1,1}$  & $\mathcal{F}^{\infty}_{1,1,1}$ & $\mathcal{F}^{1}_{1,1,2}$ & $\mathcal{F}^{10}_{1,1,2}$ & $\mathcal{F}^{\infty}_{1,1,2}$\\\hline
\% LASP &52.6 & 57.3 & 59.7 & 32.0 & 35.0 & 36.8 &47.0 & 51.0 & 50.3 & 13.6 & 21.3 & 21.2\\\hline
\% LSPE &88.8 & 87.9  & 87.1& 65.6 & 64.3 & 60.0 &66.5 & 64.3  & 67.0& 62.1 & 60.7 & 60.2\\\hline
\end{tabular}%}
\caption{Results for $\mathcal{F}^C_{1,1}$ $\mathcal{F}^C_{1,2}$, $\mathcal{F}^C_{1,1,1}$ and  $\mathcal{F}^C_{1,1,2}$ averaged over $N=10^5$ instances. }
\label{table:measure1}
\end{table*}
The results are shown in Table~\ref{table:measure1}.
First, we notice for the same game topology, $C$ does not appear to substantially affect the measure property. Parameter $C$ essentially controls the granularity of a uniform discrete distribution and approaches a uniform continuous distribution as it becomes large. The measure properties will also be quite similar as $C$ grows.
The second observation is that as the topology becomes much more complex, it is less probable for DBI to find a stable point.
The relevant probabilities degrade from $52\%\sim 59\%$ for Stackelberg games to $13\%\sim 21\%$ for 4-player 3-level games.
This suggests a limitation of our algorithm in facing complex game topologies with more intricate back-propagation.
However, we observe that the probability of an LASP being an LSPE does not decay speedily, as it achieves $87\%\sim 88\%$ for the structure of $(1, 1)$, $60\%\sim 67\%$ for $(1, 2)$, $(1,1,2)$ and $(1, 1, 1)$.

\subsection{Details of COVID-19 Game Experiments}
\paragraph{Subgame Perfect Equilibrium in an SHG}
We here elaborate how we evaluate the regret of a profile.
First let's imagine when solving a two-level games with $n_2$ players at the second-level.
Given $(1, 1)$'s action choice, the problem of computing SPE in the second-level is equivalent to computing a Nash equilibrium in the second-level.
However, an exact pure Nash equilibrium may not exist.
So which profile should we choose to propagate back to $(1, 1)$?
To resolve this issue, we define $\varepsilon$-Nash equilibrium.

\begin{definition}
For a simultaneous-move game, a profile $\bm{x}^*$ is an $\varepsilon$-Nash if for any player $n$, $\forall \bm{x}_n^\prime\in\mathcal{X}_n, u_n(\bm{x}_n^\prime, \bm{x}^*_{-n})\le u_n(\bm{x}^*)+\varepsilon$.
\end{definition}

In another word, an $\varepsilon$-Nash is a profile where for every player a unilateral deviation cannot offer benefit more than $\varepsilon$ while fixing other's profile. In the context of our example for the two level game, given $(1, 1)$'s action, we select the profile with the minimum $\varepsilon$ of the simultaneous-move game defined on level 2 as an SPE back to $(1, 1)$.

We now generalize this example to formally define the notion of $\varepsilon$-SPE in an SHG.
First let us define $\Phi_{l,i}(\bm{x})\in\mathbb{R}^{d_L}$ that returns the equilibrated profile at level $L$. 
In this profile, leaves that are not descendants of $(l, i)$ are fixed in $\bm{x}$, while $\mathit{LEAF}(l, i)$ moved to a profile that corresponding to an SPE of $\mathcal{G}_{l,i}(\bm{x})$ with an minimum $\varepsilon$.
Then we define the $\varepsilon_{l,i}(\bx)$ as $\max_{\bm{x}^\prime_{l,i}\in\mathcal{X}_{l,i}}u_{l,i}(\bx_{l,i}^\prime, \bx_{\pa(l,i)}, \Phi(\bx_{l,i}^\prime))-u_{l,i}(\bx_{l,i}, \bx_{\pa(l,i)}, \Phi(\bx_{l,i}))$ and define $\varepsilon(\bx)=\max_{l,i}\varepsilon(\bx)$ as the $\varepsilon$ of profile $\bm{x}$ in an SHG $\mathcal{G}$.

\begin{algorithm}[ht!]
\begin{algorithmic}
%\SetAlgoLined
\INPUT{An SHG instance $\G$}
\PARAM{Best response iterations $T$}
\Procedure{Search}{$\bx, (l,i)$}
\For{$\bx_{l,i}^\prime\in\mathcal{X}_{l,i}$}
\State $u_{l,i}^{re-eq}(\bx_{l,i}^\prime), \varepsilon_{\des(l,i)}(\bx_{l,i}^\prime), \bx^{re-eq(l,i)}(\bx_{l,i}^\prime)\leftarrow\textsc{Re-Eq}((l, i), \bx: \bx_{l,i}\rightarrow\bx_{l,i}^\prime)$
\EndFor
\State Find $\bx_{l,i}^*\leftarrow\arg\max_{\bx_{l,i}^\prime}u_{l,i}^{re-eq}(\bx_{l,i}^\prime)$
\State \Return $\bx_{l,i}^*, u_{l,i}^{re-eq}(\bx_{l,i}^*),\bx^{re-eq(l,i)}(\bx_{l,i}^*)$
\EndProcedure
\Procedure{Re-Eq}{$\mathcal{G}, \bm{x}, (l, i)$}
\State $\varepsilon\leftarrow 0$
\If{$l<L$}
\State $\bx, \varepsilon \leftarrow \textsc{SHG\_Solve}((l, i),\bx)$
\EndIf
\Return $u_{l,i}(\bx), \varepsilon, \bx$
\EndProcedure

\Procedure{SHG\_Solve}{$(l, i), \bx$}
\State $\bx^0\leftarrow \bx$
\State $\forall (l+1, j)\in\ch(l, i)$, replace its action in $\bx^0$ with other random initialization
\For{$t=1,2,\dotsc,T$} \Comment{level-wise best response}
\For{$(l+1, j)\in \ch(l, i)$}
\State $u_{l+1,j}^{re-eq}, \varepsilon_{\des(l+1,j)}, \bm{x}^{re-eq(l+1,j)}\leftarrow\textsc{Re-Eq}(\mathcal{G},\bx^{t-1},(l+1,j))$
\State $\bx_{l+1,j}^\prime, u_{l+1,j}^{re-eq-max}, \bx^{re-eq-max(l+1,j)}\leftarrow\textsc{Search}(\bx^{t-1}, (l+1,j))$
\State $\varepsilon_{l+1,j}^{t-1}=\max\{\varepsilon_{\des(l+1,j)},u_{l+1,j}^{re-eq-max}-u_{l+1,j}^{re-eq}\}$
\EndFor
\State $\bm{x}^{t}\leftarrow\bm{x}^{t-1}$
\For{$\forall (l+1, j)\in\ch(l,i)$}
\State Replace dimensions of $\bm{x}^t$ belonging to $\des(l+1,j)$ with the ones in $\bx^{re-eq-max(l+1,j)}$
\EndFor
\State $\varepsilon_{l+1}^{t-1}\leftarrow\max_{i}\varepsilon_{l+1,j}^{t-1}$
\EndFor
\State $t^*\leftarrow\arg\min_{t}\epsilon_{l+1}^{t}$
\State \Return $\bm{x}^{t^*}, \varepsilon^{t^*}_{l+1}$
\EndProcedure
\Procedure{Compute\_$\varepsilon$}{$\bx$}
\For{$(l,i)$}
\State $\bm{x}^*_{l, i}, u_{l,i}^{*}, \bx^{re-eq(l,i)}\leftarrow\textsc{Search}(\bx, (l,i))$
\State $\varepsilon_{l,i}\leftarrow  u_{l,i}^{*}-u_{l,i}(\bx)$
\EndFor
\Return $\max_{l,i}\varepsilon_{l,i}$
\EndProcedure
\end{algorithmic}
\caption{Procedures for computing $\varepsilon$ of profile $\bx$ and an algorithm for computing an approximate SPE}\label{alg:eps}
\end{algorithm}

We next provide an approximate algorithm (see Algorithm~\ref{alg:eps}) to compute $\epsilon$ in an SHG.
Pay attention here that $\bm{x}:\bm{x}_i\rightarrow\bm{x}_i^\prime$ means replacing $\bm{x}_i$ of $\bm{x}$ with $\bm{x}_i^\prime$.
The functions in Algorithm~\ref{alg:eps} operate as follows.
The procedure $\textsc{Search}$ takes a given joint profile $\bx$, player index $(l, i)$ and returns a best response profile for a given player $(l, i)$.
Our approach is that for each action in $\bx_{l,i}^\prime\in\mathcal{X}_{l,i}$, we re-equilibriate subgames $\mathcal{G}_{l,i}(\bx:\bx_{l,i}\rightarrow\bx_{l,i}^\prime)$, and then compute the corresponding payoff.
In our actual implementation, we discretize $\mathcal{X}_{l,i}$ in a bucket of grid points, and search within such bucket.
The procedure $\textsc{Re-Eq}$ returns the re-equilibrated profile of $\mathcal{G}_{l,i}(\bx)$ given $(l, i)$ and $\bx$.
The procedure $\textsc{SHG\_Solve}$ solve a simultaneous-move game at level $l+1$.
It applies an iterative best response approach for $\ch(l, i)$ to generate diverse profiles, and select the one with minimum possible $\varepsilon$.
In each iteration, for each $(l+1, j)\in\ch(l, i)$, it computes its best response action against the previous joint profile $\bx^{t-1}$.
And then in the next iteration it replace those descendant-profiles of $(l+1, j)$ in $\bx^{t-1}$ by computed re-equilibrated profiles when $(l+1, j)$ selected its best response.
Then it will return the joint profile with the minimum $\varepsilon$ found so-far.

To solve the whole game, we just call $\textsc{SHG\_Solve}((0, 0), \bx)$, where $\bx$ is some other action profile.
To compute the $\varepsilon$ of a given profile, we just call \textsc{Compute\_ $\varepsilon$}, 
where it just compute the maximum unilateral deviation for every player using $\textsc{Search}$ to compute the best response action and payoff.

\paragraph{Structured Game Model Inspired by COVID-19 Policy-Making}

We will now describe in detail the particular subclass of SHGs that we studied in our experiments reported in Section~\ref{sec:exp3}. This class is based on the SHG proposed in~\citet{jia2021game} which we describe in detail here. The exposition is in terms of a \textit{cost function} $\Co_{\li}(\bx_{\li}, \bx_{\pa(\li)}, \bx_L)$ for each player, which is more natural in this context, rather than the payoff function $u_{l,i}$ introduced in Section~\ref{sec:framework}, with the understanding that $u_{\li} \equiv -\Co_{\li}$.

There are $L=3$ levels in the hierarchy such that player $(1,1)$ represents the federal government (or, simply, government), the players $(2,i)$, $i \in \{1,2,\dots,n_2\}$ are state governments (or, simply, states), and the players $(3,i)$, $i \in \{1,2,\dots,n_3\}$ are county governments (or, simply, counties) partitioned into groups such that each group shares a single state as a parent. 

Each player $(\li)$ takes a bounded, scalar action $x_{\li} \in [0,1]$ which is a \textit{social-distancing factor} that (multiplicatively) reduces the proportion of post-intervention contacts among individuals --- a lower number implies a stronger policy intervention, hence a lower number of infections but a higher cost of implementation (see below). The actions taken by counties represent policies that get  \textit{actually implemented} (hence directly impact the realized cost of every player --- one of the defining attributes of SHGs) while those taken by the government and states are \textit{recommendations}. Similar to \citet{jia2021game}, we also study a restricted variant where each county is non-strategic and constrained to comply with the action (recommendation) of its parent-state, effectively reducing the model to a $2$-level hierarchy. We call this special case a \textit{two-level game} (Figure~\ref{fig:eps1}(a)) and the more general model a \textit{three-level game} (Figure~\ref{fig:eps1}(b)).

The cost function of each player $(\li)$ has, in general, three components: a \emph{policy impact cost} $\Cinc_{\li}(\bx_L)$ which we will elaborate on below; a \emph{policy implementation cost} $\Cdec_{\li}(\bx_L)$, e.g. economic and psychological costs of a lockdown; and, for each player in levels $l > 1$, a \emph{non-compliance cost} $\Cnc_{\li}(\bx_{\li}, \bx_{\pa(\li)})$, a penalty incurred by a policy-maker for deviating from the recommendation of its parent in the hierarchy (e.g., a fine, litigation costs, or reputation harm).

Let $N_{(3,i)} > 0$ denote the fixed population under the jurisdiction of county $(3,i)$ for every $i \in \{1,2,\dots,n_3\}$. By construction, the population under each state is given $N_{(2,i)}=\sum_{(3,j) \in \ch(2,i)} N_{(3,j)}$ and the population under the government is $N_{(1,1)}=\sum_{i=1}^{n_2} N_{(2,i)}$. We next define the expressions for each component of the cost function.

\noindent\paragraph{Policy Impact Cost: } This cost component is a quadratic closed-form approximation to the agent-based model introduced by~\citet{wilder2020modeling}. This is inspired by the infection cost computation approach in~\citet{jia2021game} but they used a different closed-form approximation. For each each county $(3,i)$, let $N^{init}_{(3,i)}$ denote the number of infected individuals within the population of the county prior to policy intervention; thus, the number of \textit{post-intervention susceptible} individuals is $(N_{(3,i)}-N^{init}_{(3,i)})x_{3,i}$. 
Another parameter in the game is the the \textit{transport matrix} $R = \{r_{aa'}\}_{a,a' \in {(3,1),(3,2),\dots,(3,n_3)}}$, where $r_{aa'} \ge 0$ is the proportion of the population of county $a'$ that is active in county $a$ in the absence of an intervention. Thus, in the number of post-intervention infected individuals of county $a'$ that is active in county $a$ is $r_{aa'}N^{init}_{a'}x_{a'}$. 
The last parameter in the model is $M$, the average number of contacts with active individuals that a susceptible  individual makes, and finally $\mu$ is the probability that a susceptible individual gets infected upon contact with an active infected individual is $\mu \in (0,1)$.
	
Putting these together, the policy impact cost is defined by the fraction of post-intervention infected individuals in county $a = (3,i)$, $i \in \{1,2,\dots, n_3\}$:
$$\Cinc_a(\bx_L) = \mu M x_a \frac{N_a - N_a^{init}}{N_a^2} (\sum_{a'}r_{aa'}N_{a'}^{init}x_{a'}).$$
	
For a higher-level player $(\li)$,
$$\Cinc_{\li}(\bx_L) = \frac{1}{N_{(\li)}}\sum_{a \in \ch(\li)} N_{a}\Cinc_{a}(\bx_L).$$

\noindent\paragraph{Policy Implementation Cost:} For each county $(3,i)$, the policy implementation cost is given by  $$\Cdec_{3,i}(\bx_L)=1-x_{3,i}.$$
For a higher-level player $(\li)$,
$$\Cdec_{\li}(\bx_L) = \frac{1}{N_{(\li)}}\sum_{a \in \ch(\li)} N_{a}\Cdec_{a}(\bx_L).$$

\noindent\paragraph{Non-Compliance Cost:} The non-compliance cost of player $(\li)$ for $l\in\{2,3\}$ is given by Euclidean distance between its action and that of its parent:
\begin{align*}
\Cnc_{\li}(x_{\li}, x_{\pa(\li)})=
(x_{\li} - x_{\pa(\li)})^2.
\end{align*}
Finally, each player $(\li)$ for $l > 1$ has an idiosyncratic set of  \textit{weights} $\kappa_{\li} \geq 0$ and $\eta_{\li} \geq 0$ that trade its three cost components off against each other via a convex combination, and account for differences in ideology; the overall cost of such a player is given by
\begin{align}
	\Co_{\li}(x_{\li}, x_{\pa(\li)},\bx_L) = \kappa_{\li} \Cinc_{\li}(\bx_L) + \eta_{\li} \Cdec_{\li}(\bx_L) + (1 -\kappa_{\li}-\eta_{\li}) \Cnc_{\li}(x_{\li}, x_{\pa(\li)}).\label{eq:covid19overall_cost}
\end{align}
The player $(1,1)$ obviously has no non-compliance issues, hence it has only one weight $\kappa_{1,1} > 0$, its overall cost being
\begin{align*}
\Co_{1,1}(x_{\li}, \bx_L) = \kappa_{1,1} \Cinc_{1,1}(\bx_L) + (1-\kappa_{1,1})\Cdec_{1,1}(\bx_L).
\end{align*}

In our experiments, we set $r_{aa'}=1/n_3$ for every pair of counties $(a,a')$, $M=20$ and $\mu=0.3$.

\paragraph{Experimental Setup and Further Results}
We formally define the algorithm BRD as the procedure $\textsc{SHG\_Solve}((0, 0), \bm{x})$ for some randomly initilized $\bx$.
We start by defining the concept of one full \textit{algorithm iteration} for each of DBI and BRD. For DBI(1, 20), DBI(1, 50), DBI(1, 2, 4), DBI(1, 2, 10), one algrithm iteration consists of 50 steps of gradient ascent, with a learning rate of 0.01. For BRD(1, 20) and BRD(1, 50), one algorithm iteration consists of 20 iterations of level-wise best response during the recursive procedure in \textsc{SHG\_Solve};
for BRD(1, 2, 4) and BRD(1, 2, 10), it corresponds to 500 and 200 iterations of level-wise best response, respectively.
For DBI we adopt a projector operator that project the resulted action into the nearest point in $[0, 1]$.

We discretize each action space uniformly into 101 grid points for two-level games, and 11 grid points for three-level games.
We let $T=100$ for BRD(1, 20) and BRD(1, 50) and $T=20$ for BRD(1, 2, 4) and BRD(1, 2, 10).
In the three-level experiments, the $\kappa$ is set to be $0.5$ for counties and the states and $0.8$ for the government. The $\eta$ is set to be $0.2$ for the states, $0.3$ for the counties in (1,2,4) setting , and $0.2$ for (1,2,10) experiment.
In the two-level experiments, the $\kappa$ is set to be $0.2$ for the government, $0.5$ for counties and the states. The $\eta$ is set to be $0.2$ for counties and states.

\begin{figure*}[ht!]
\centering
\subfigure[Two-level games.]{\includegraphics[width=0.45\columnwidth]{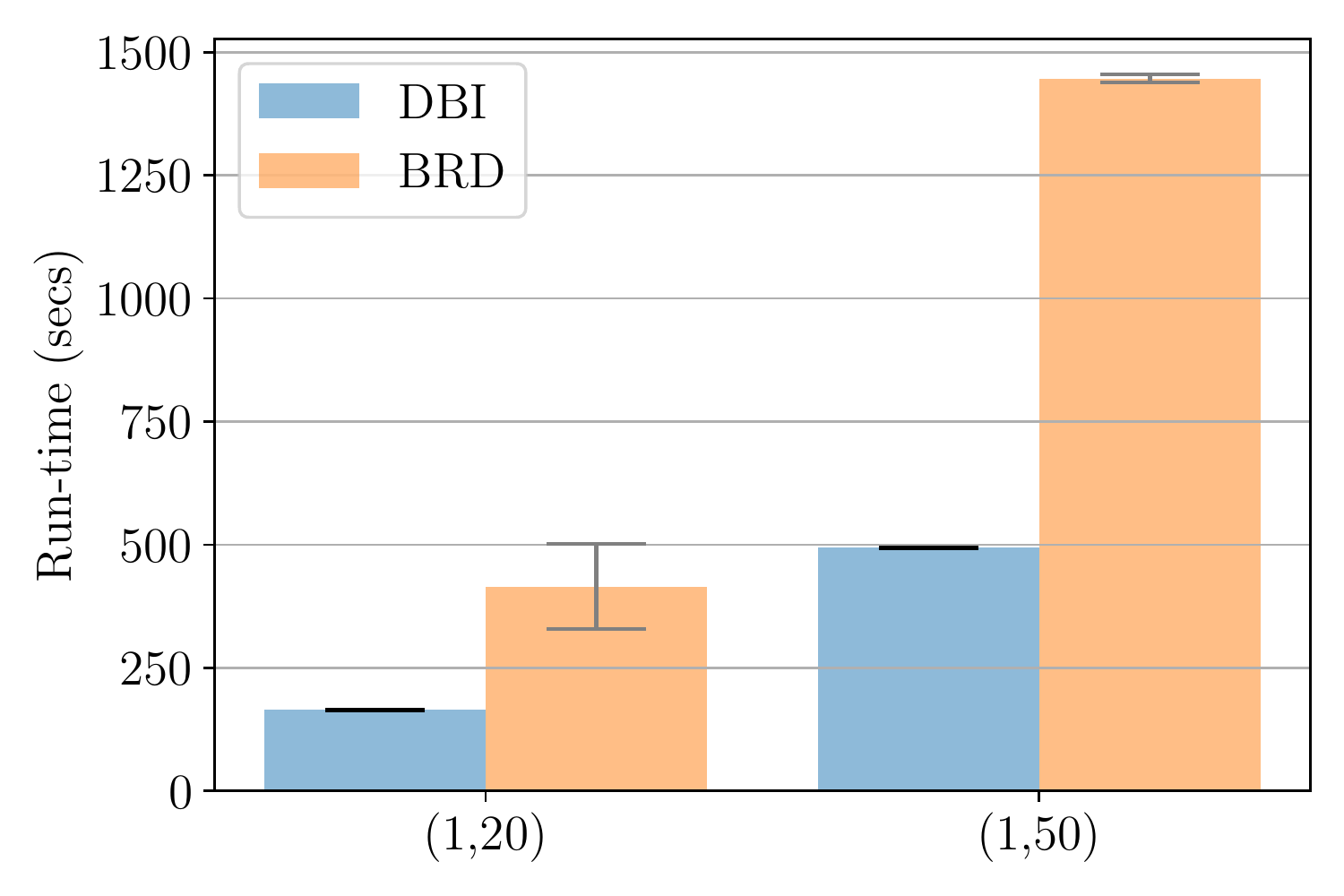}}
\subfigure[Three-level games.]{\includegraphics[width=0.45\columnwidth]{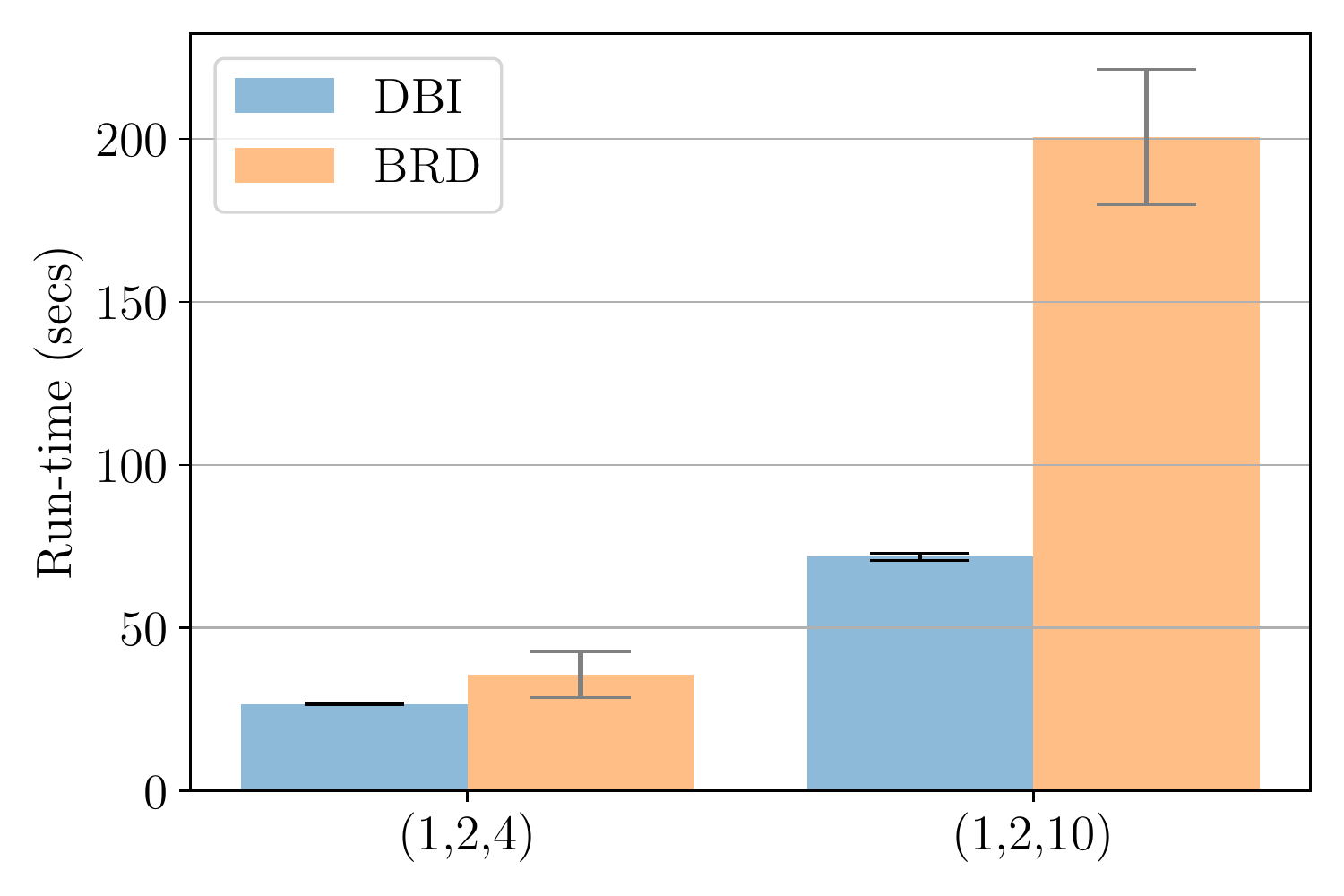}}
\caption{The run-time (in secs) on COVID-19 Game Models. T = 20 for BRD experiments.}
\label{fig:run_time}
\end{figure*}

To further study the scalability of the BRD and DBI algorithms, we compare the run-time that (1) the DBI algorithm converges and (2) the BRD algorithm terminates. The convergence of DBI is defined as the convergence of the action profile. When DBI converges, the action profile remains unchanged. In the projected gradient descent method, if all the gradients go to zero, DBI converges. Besides, it is also likely to hit the boundary of the constraints. The DBI is possible to converge with a non-zero gradient norm.  The BRD algorithm 
terminates either T achieves, or $\epsilon$ goes to zero for all players. Under these conditions, the two algorithms find their optimal solutions. Figure \ref{fig:run_time} demonstrates the run-time results. We conduct each experiment four times with different random seeds. In two-level problems, the DBI algorithm is more than two times faster than BRD algorithms. Although the action spaces are discretized to 11 grid points for three-level games rather than 101 grids, DBI algorithms still perform better. In practice, discretizing the action spaces in 101 grids for three-level games is computational intensively. The performance and run-time of the BRD algorithm are more dependent on randomization and the initial points for the best responses at each level. When we face many players or multiple levels, the DBI algorithm is a natural choice.
The DBI algorithm is significantly more efficient and more stable than the BRD algorithm.

\subsection{Details of Hierarchical Public Goods Game Experiments} \label{app:pgg}

We extend the described public good game on Zachary's Karate club network to a (1-2-34) hierarchical game by introducing the non-compliance costs similar to the COVID-19 game model. The overall utility of the club individuals should be the combination of public good utility $u_i$ and  $\text{C}_i^{\text{NC}}(x_i, x_{\text{PA(i)}}) = (x_i - x_{\text{PA(i)}})^2$.
\begin{align}\label{eq:3rd_utility}
U_{3,i}(x_i, x_{\text{PA(i)}}, \mathbf{x}_L) =  (1-\kappa_{3,i})u_i(x_i, \mathbf{x}_L ) - \kappa_{3,i} \text{C}_i^{\text{NC}}(x_i, x_{\text{PA(i)}}).
\end{align}

In the second level, the administrator and instructor's utility is composed of the total public good utility of their group members, and their own non-compliance costs.  
\begin{align}\label{eq:2nd_utility}
U_{2,i}(x_i, x_{\text{PA(i)}}, \mathbf{x}_L) =  (1-\kappa_{2,i} )\sum_{j \in \text{CHD}(i)} u_j(x_j, \mathbf{x}_L ) - \kappa_{2,i} \text{C}_i^{\text{NC}}(x_i, x_{\text{PA(i)}}).
\end{align}

The root player's utility is considered as the social welfare of the whole club.
\begin{align}\label{eq:1st_utility}
U_{1,1}(x_i, x_{\text{PA(i)}}, \mathbf{x}_L) =  \sum_{j} u_j(x_j, \mathbf{x}_L )
\end{align}

In our experiments, we set $\kappa_{2,i} = \kappa_{3,j} = 0.5,\ \forall i,j$. Other parameters of the public good utility are set to be $a_i = 0, b_i=1, c_i =6,\ \forall i$. We use learning rate $0.1$ in DBI. We project the results to $[0,1]$. To compare with the DBI, we use BRD with discretized factors $0.5,0.2,0.1,0.05$, and best response rounds $2,3$.

\subsection{Details of Hierarchical Security Game Experiments} \label{app:sec}

Let $a = (a_1,\ldots,a_n)$ be the attacker strategy, the probability distribution over which defenders are attacked. 
In our experiments, $a$ is set to follow a logit distribution $\mathrm{softmax}(\lambda (1-\mathbf{x}_L))$, with defenders having lower security investment more likely to be attacked.
Let $c_i(x_i) = c_ix_i $ be the cost of security investment for defender $i$, which is assumed to be linear.
The utility of defender $i$ is then as follows:
$$u_i(x_i, \mathbf{x}_L) = a_i \frac{x_i}{1+x_i} + \sum_{j \ne i}a_j (1-q_{ji} \frac{1}{1+x_j}\frac{1}{1+x_i})-c_i(x_i)$$.

We then build the (1,3,6) structured hierarchical game by introducing the non-compliance costs $\text{C}_i^{\text{NC}}(x_i, x_{\text{PA(i)}}) = (x_i - x_{\text{PA(i)}})^2$. The utility of each agent in different levels are the same as that of public good game (Eq. \ref{eq:3rd_utility}, \ref{eq:2nd_utility}, \ref{eq:1st_utility}). 

In the experiments, we consider the agents who almost obey the parent's suggestion ($\kappa = 0.1$) and the somewhat selfish agents ($\kappa = 0.5$). 
The influence probability $q_{ji}, \ \forall i,j$ are set to be equal among the agents, and other parameters are set to $c_i = 0.2$, $\lambda = 5$.
We use a learning rate of $0.1$ in DBI. 
To compare with the DBI, we use BRD with discretized factors $0.2,0.1,0.05$ and best response rounds $2,3$.
Since there is no upper bound of the strategy profile, we set $x_i \leq 1.0$ according to the performance.
Then we search the strategy space $x\in [0,1]$ in the BRD execution. 
\fi

\end{document}